\documentclass[12pt]{article}
\usepackage{graphicx,psfrag,epsf}
\usepackage{enumerate,color,float}

\usepackage{url} % not crucial - just used below for the URL 
\usepackage{amssymb, amsthm, amsmath}
\usepackage{graphicx}
\usepackage[authoryear]{natbib}
\usepackage{bm}
\usepackage{comment}
\newcommand{\bdelta}{ \mbox{\boldmath $\delta$}}

\newcommand{\bbeta}{\boldsymbol{\beta}}

\newcommand{\blambda}{\boldsymbol{\lambda}}

\newcommand{\bSigma}{\boldsymbol{\Sigma}}

\newcommand{\bZ}{ \mbox{\bf Z}}

\newcommand{\bY}{ \mbox{\bf Y}}

\newcommand{\bI}{ \mbox{\bf I}}

\newcommand{\bW}{ \mbox{\bf W}}

\newcommand{\beqn}{ \begin{eqnarray}}
\newcommand{\eeqn}{ \end{eqnarray}}

%
% Math commands by Thomas Minka 
%
% Revised by Jyotishka Datta & Brandon Willard
% Acknowledgement: JD received this from Prof. Alan Qi.
%

%% Special shortcuts for us

%\setlength{\textfloatsep}{10pt plus 1.0pt minus 2.0pt}
%\setlength{\floatsep}{12.0pt plus 2.0pt minus 5.0pt}
%\setlength{\intextsep}{12.0pt plus 2.0pt minus 5.0pt}
%\setlength{\belowcaptionskip}{-2pt}

%\def\distrib{\mathrel{\ooalign{%
%  \raisebox{0.75\height}{{\small{ind}}}\cr\hidewidth$\sim$\hidewidth\cr}}}
%  

\newcommand{\Tr}{^{\rm T}}

% careful: ``null'' is already a latex command

%\newcommand{\tr}{{\rm tr}}

% boldface characters

  % for derivatives
 % for exponentials

%\newcommand{\k}{{\bf k}}
% in Latex2e this must be renewcommand

%\renewcommand{\o}{{\bf o}}

%\newcommand{\q}{{\bf q}}

%\newcommand{\x}{{\bf x}}

%\newcommand{\z}{{\bf z}}

%\newcommand{\C}{{\bf C}}

%\newcommand{\G}{{\bf G}}

%\newcommand{\M}{{\bf M}}
%\newcommand{\N}{{\bf N}}

%\renewcommand{\S}{{\bf S}}

%\newcommand{\U}{{\bf U}}
%\newcommand{\V}{{\bf V}}

\newcommand{\Y}{{\bf Y}}

% this is for latex 2.09
% unfortunately, the result is slanted - use Latex2e instead
%\newcommand{\bfLambda}{\mbox{\boldmath$\Lambda$}}
% this is for Latex2e

% Yuan Qi's boldsymbol

%\newcommand{\comment}[1]{}

%\newcommand{\bs}{\backslash}
\newcommand{\ben}{\begin{enumerate}}
\newcommand{\een}{\end{enumerate}}
\newcommand{\beq}{\begin{equation}}
\newcommand{\eeq}{\end{equation}}
\newcommand{\bde}{\begin{description}}
\newcommand{\ede}{\end{description}}

\newcommand{\abs}[1]{\lvert#1\rvert}
\newcommand{\norm}[1]{\lVert#1\rVert}
\newcommand{\norms}[1]{\lVert#1\rVert^2}

\newcommand{\bX}{{\bf X}}

\newcommand{\iid}{\stackrel{\mathrm{iid}}{\sim}}

% the last {} is a hack for double subscript errors

%% INTEGRALS 

%\newtheorem{theorem}{THEOREM}
%\numberwithin{theorem}{section}
%\newtheorem{Proof}{PROOF}
%\newtheorem{Def}{DEFINITION}
%\numberwithin{Def}{section}
%\newtheorem{remark}{REMARK}
%\numberwithin{remark}{section}
%\newtheorem{Qes}{Question}
%\newtheorem{proposition}{PROPOSITION}
%\numberwithin{proposition}{section}
%\newtheorem{lemma}{LEMMA}
%\numberwithin{lemma}{section}
%\newtheorem{Cor}{COROLLARY}
%\numberwithin{Cor}{section}
%\newtheorem{Exa}{Example}
%\newtheorem{Eq}{Equation}
%\newtheorem{assn}{ASSUMPTION}
%\newtheorem{result}[theorem]{Result}
%\newtheorem{result}[theorem]{RESULT}

%\newtheoremstyle{slplain}% name
%  {1\baselineskip\@plus.2\baselineskip\@minus.2\baselineskip}% Space above
%  {.5\baselineskip\@plus.2\baselineskip\@minus.2\baselineskip}% Space below
%  {\slshape}% Body font
%  {}%Indent amount (empty = no indent, \parindent = para indent)
%  {\bfseries}%  Thm head font
%  {.}%       Punctuation after thm head
%  { }%      Space after thm head: " " = normal interword space;
%        %       \newline = linebreak
%  {}%       Thm head spec
%
%
%\theoremstyle{slplain}

\newtheorem{theorem}{Theorem}

\newtheorem{lemma}[theorem]{Lemma}

% \usepackage{booktabs,array}
% \def\Midrule{\midrule[\heavyrulewidth]}
% \newcount\rowc

%
%\makeatletter
%\def\ttabular{%
%\hbox\bgroup
%\let\\\cr
%\def\rulea{\ifnum\rowc=\@ne \hrule height 1.0pt \fi}
%\def\ruleb{
%\ifnum\rowc=1\hrule height 1.0pt  \else
%%\ifnum\rowc=3\hrule height 0.0pt%\heavyrulewidth 
%\ifnum\rowc= 3  \hrule height 0.5pt \else%\heavyrulewidth 
%\ifnum\rowc= 5  \hrule height 0.5pt \else%\heavyrulewidth 
%\ifnum\rowc= 7  \hrule height 0.5pt \else%\heavyrulewidth 
%\ifnum\rowc= 9  \hrule height 0.5pt \else%\heavyrulewidth 
%\ifnum\rowc= 11  \hrule height 0.5pt %\heavyrulewidth 
  %\else \hrule height 0pt%\lightrulewidth
%\fi\fi\fi\fi\fi\fi}
%\valign\bgroup
%\global\rowc\@ne
%\rulea
%\hbox to 7em{\strut \hfill##\hfill}%
%\ruleb
%&&%
%\global\advance\rowc\@ne
%\hbox to 7em{\strut\hfill##\hfill}%
%\ruleb
%\cr}
%\def\endttabular{%
%\crcr\egroup\egroup}
%

\newcommand{\qn}{\frac{n_2}{n_T}}
\newcommand{\kapatau}{\frac{\tau^2/\sigma_n^2}{1+\tau^2/\sigma_n^2}}
\newcommand{\meanbetats}{\overline{Y}_{Tj} + \kapatau(1-q_n)(\overline{Y}_{2j} - \overline{Y}_{1j})}
\newcommand{\varbetats}{\frac{\sigma^2}{n_2} \frac{q_n + \tau^2/\sigma_n^2}{1+\tau^2/\sigma_n^2}}
\DeclareMathOperator{\EX}{\mathbb{E}}

\newcommand{\Risk}[2]{\mathbb{E}[\|#1 - #2\|^2]}
\newcommand{\Riskrt}[3]{\EX_{#1}[\|#2 - #3\|^2]}
\newcommand{\Reals}[1]{\mathbb{R}^{#1}}
\newcommand{\kll}{D_{\rm{KL}}\left(  \pi_{\tilde{\tau}} \| \pi_0\right)}
\newcommand{\tildetau}{\tilde{\tau}}
\linespread{1.5}
%\pdfminorversion=4
% NOTE: To produce blinded version, replace "0" with "1" below.
\newcommand{\blind}{1}

% DON'T change margins - should be 1 inch all around.
\addtolength{\oddsidemargin}{-.5in}%
\addtolength{\evensidemargin}{-.5in}%
\addtolength{\textwidth}{1in}%
\addtolength{\textheight}{-.3in}%
\addtolength{\topmargin}{-.8in}%

\begin{document}

\def\spacingset#1{\renewcommand{\baselinestretch}%
{#1}\small\normalsize} \spacingset{1}

%%%%%%%%%%%%%%%%%%%%%%%%%%%%%%%%%%%%%%%%%%%%%%%%%%%%%%%%%%%%%%%%%%%%%%%%%%%%%%

\if1\blind
{
  \title{\bf A Bayesian shrinkage estimator for transfer learning}
  \author{Mohamed A. Abba,\thanks{
    This research was partially supported by the National Science Foundation (DMR-2022254) and the National Institutes of Health (R01DE031134 and R56HL155373).  The content is solely the responsibility of the authors and does not necessarily represent the official views of the National Science Foundation nor the National Institutes of Health.}\hspace{.2cm} 
    Jonathan P. Williams,
    and 
    Brian J. Reich\\
    Department of Statistics \\
    North Carolina State University}
  \maketitle
} \fi

\begin{abstract}
Transfer learning (TL) has emerged as a powerful tool to supplement data collected for a target task with data collected for a related source task.  The Bayesian framework is natural for TL because information from the source data can be incorporated in the prior distribution for the target data analysis.  In this paper, we propose and study Bayesian TL methods for the normal-means problem and multiple linear regression. We propose two classes of prior distributions.  The first class assumes the difference in the parameters for the source and target tasks is sparse, i.e., many parameters are shared across tasks. The second assumes that none of the parameters are shared across tasks, but the differences are bounded in $\ell_2$-norm.  For the sparse case, we propose a Bayes shrinkage estimator with theoretical guarantees under mild assumptions. The proposed methodology is tested on synthetic data and outperforms state-of-the-art TL methods. We then use this method to fine-tune the last layer of a neural network model to predict the molecular gap property in a material science application. We report improved performance compared to classical fine tuning and methods using only the target data.
\end{abstract}

\noindent%
{\it Keywords:} domain adaptation, global-local shrinkage, high-dimensional regression, horseshoe prior, normal-means problem

\spacingset{1.45} % DON'T change the spacing!

\section{Introduction}\label{s:intro}
In the field of statistical inference and modeling, the effective utilization of available data is vital for achieving accurate and robust results, but in many real-world scenarios, obtaining sufficient and high-quality data can be challenging.  {\em TL}, a concept originally popularized in the domain of deep learning \citep[e.g.,][]{yosinski2014transferable,abba2023}, has gained prominence for its potential to address data scarcity and enhance model performance of various machine learning tasks \citep{weiss2016survey}. The fundamental idea of TL is to leverage a task trained from a source domain (associated with a population for which data are readily available) and apply it in training a task in a target domain (associated with a related but distinct population for which limited data are available) to improve predictive performance in the target domain \citep{pan_survey_tl}. 

Our paper establishes theoretical guarantees for TL in the classical {\em normal-means problem}, and naturally extends to high-dimensional linear regression.  We develop a two-stage Bayesian method based on shrinkage priors \citep[e.g.,][]{VANERP201931} in two settings. In the first setting, we assume that the difference between the source and target means is \textit{sparse}, and show under mild conditions that our two-stage method achieves considerable risk reduction compared to using only the target data. In the second setting, we focus on the case where the difference in means is bounded in norm.  

For high-dimensional regression problems, penalized approaches have been shown to provide a safeguard against over-fitting \citep{williams2019,koner2023}, and result in estimators with optimal predictive risk in specific scenarios \citep{mcneish2015using}. Further, penalization fits well within the Bayesian framework, and it is demonstrated in \cite{li2010bayesian, vanderpas_horsehoe}, and \cite{van2017adaptive} that the Bayesian formulation of penalized regression via shrinkage priors leads to at least as good performance as frequentist analogues.  Accordingly, Bayesian shrinkage priors have become a popular tool for inference in high-dimensional problems; see \cite{VANERP201931} for a complete survey.  A notable advantage of the Bayesian approach is the natural uncertainty quantification inherent in posterior probabilistic inference.

Approaches to Bayesian TL for linear regression have been considered previously in the literature, leveraging the convenience to borrow strength across data sets via prior distributions. To our knowledge, however, there has not been any theoretical analysis of Bayesian TL methods for the normal-means or high-dimensional linear regression problems.  The recent Bayesian TL approach developed in \cite{hickey2022} applies to the normal-means and high-dimensional linear regression problems, but focuses on the calibration of prediction sets.  A prior distribution constructed using semi-definite programming is presented in \cite{raina_informative_tl} to build a multivariate Gaussian prior on the target parameters; significant empirical improvement in the target task is reported.  A Bayesian TL framework is proposed in \cite{karbalayghareh2018optimal_bayesian_tl} in the case where the source and target domains are modeled through a joint prior density on the model parameters.  A prior distribution is developed in \cite{Dzyabura2019} to link the source task outcomes to the target, and the method is applied to the online and offline behaviors of customers; this approach could be viewed as a ridge penalty.  Other than that of \cite{hickey2022}, all of these methods require access to the source data, which presents difficulties in many domains with privacy restriction.  Our proposed method only requires access to the parameter estimates from the source task.

Outside of the Bayesian paradigm, a variety of statistical methods for TL in linear regression have been proposed in recent years.  Two-sample hypothesis testing problems are investigated in multiple testing scenarios with a large number of tests in \cite{xia2019gap} and \cite{tony2019two_samples}.  High-dimensional linear models are focused on in \cite{bastani2021predicting} with one source task, and a two-stage estimator is developed with bounded excess risk when the sample size of the source task is larger than the number of covariates.  A nearly optimal method for selection of informative source task samples is constructed in \cite{li2022transfer} in the case where the model parameters are sparse for both the source and target tasks.  Lastly, estimators were developed in \cite{lei2021near} that achieve near-minimax linear risk for linear regression problems under covariate shift.  Aside from our Bayesian construction, another key difference between our method and the aforementioned papers is that we assume no structure on the source task parameters.   

The remainder of the paper proceeds as follows. Sections \ref{s:method} and \ref{s:theory} introduce our proposed statistical methods for the normal-means problem, and establish theoretical properties.  A simulation study is presented in Section \ref{s:sim} to empirically evaluate the performance of our proposed methods compared to other approaches.  In Section \ref{s:app}, we develop an extension to the linear regression case and apply our method to fine-tuning the last layer of a neural network for material informatics prediction.  Our paper concludes in Section \ref{s:discussion} with a few final remarks.

\section{Statistical model}\label{s:method}

Denote the source domain data as $\bY_{11}, \dots, \bY_{1n_1} \in \Reals{p}$ and the target domain data as $\bY_{21}, \dots, \bY_{2n_2} \in \Reals{p}$. We assume the data generating model:
\begin{align*}
    \bY_{11}, \dots, \bY_{1n_1} \mid \bbeta_1, \sigma  & \iid \mbox{Normal}(\bbeta_1, \sigma^2\bI_p), \text{ independent of}\\
    \bY_{21}, \dots, \bY_{2n_2} \mid \bbeta_2, \sigma  & \iid \mbox{Normal}(\bbeta_2, \sigma^2\bI_p),
\end{align*}
where $\bbeta_1$ and $\bbeta_2$ are the $p$-dimensional vectors of means for the source and target data, respectively, and the noise level $\sigma^2$ is assumed to be the same across data sets.  Our primary inferential goal is to estimate the target mean vector $\bbeta_2$ using target data and augmented information from the source data.
 
In the case where the two data sets are similar and $n_2 \ll n_1$ we will argue that the inferential goal is achieved by choosing a prior that penalizes the difference between the means, $\bdelta := \bbeta_2 - \bbeta_1$. If $\bdelta = \mathbf{0}$ the natural prior for $\bbeta_2$ is the posterior of $\bbeta_1$ given $\bY_1=\{\bY_{11},\dots,\bY_{1n_1}\}$. However, if $\bdelta$ is different from zero, this prior would be concentrated around the wrong region of the parameter space. We consider two priors for two different cases for the structure of $\bdelta$. In the first setting (Section \ref{s: means_spase}), we assume that $\bdelta$ is sparse with most entries being exactly zero, implying most elements of $\bbeta_1$ and $\bbeta_2$ are equal. In the second case (Section \ref{s: bounded_means}), the $\ell_2$-norm of $\bdelta$ is assumed to be bounded by some constant, implying none of the elements of $\bbeta_1$ and $\bbeta_2$ are necessarily equal but that their differences are small.

We consider a two-stage approach, where we first fit the source model given the abundant data $\bY_1$ and an improper flat prior for $\bbeta_1$, resulting in the posterior for $\bbeta_1\mid \bY_1, \sigma^2 \sim \mbox{Normal}(\overline{\bY}_1, \sigma^2/n_1\bI_p)$, where $\overline{\bY}_1 := \sum_{i=1}^{n_1} \bY_{1i}/n_1$.  The second stage is to estimate the target mean vector by $\hat{\bbeta}_2 := \hat{\bbeta}_1 + \hat{\bdelta}$, where $\hat{\bbeta}_1 := \EX[\bbeta_1\mid\bY_1]$, i.e., the posterior mean from the source data, and the estimator $\hat{\bdelta}$ is to be determined in the following sections.  We assume that the source data sample size $n_1$ is sufficiently large so that the quality of our target estimate will mainly depend on $\hat{\bdelta}$ and its assumed structure.

\subsection{Sparse case}\label{s: means_spase}

In the case of sparse $\bdelta$, the prior must have considerable mass for nonzero-valued components of $\bdelta$ while still having a spike at the origin for zero-valued components.  This can be achieved using a horseshoe (HS) prior \citep{carvalho2009handling}, although other possibilities are discussed in Section \ref{s:discussion}.  The HS prior belongs to the family of global-local shrinkage priors which have been widely applied for sparse Bayesian learning.  The HS prior is characterized by an infinite mode at the origin and Cauchy-like, heavy tails. This translates to robust, large-signal recovery and adaptation to unknown sparsity \citep{carvalho2009handling}. Furthermore, the theoretical properties of the HS prior have been well-studied \citep{datta2013asymptotic}.  In particular, it is shown in \cite{vanderpas_horsehoe} that when the true parameter vector is sufficiently sparse, the posterior mean constructed with a HS prior achieves the minimax rate and the posterior distribution contracts at the minimax rate. 

Denote the sample mean of the target data by $\overline{\bY}_2 := \sum_{i=1}^{n_2} \bY_{2i}/n_2$.  Given $\sigma$, which can efficiently be estimated using the source data, 
\begin{align*}
\overline{Y}_{1j} \mid  \beta_{1j}, \sigma & \sim \mbox{Normal}(\beta_{1j}, \sigma^2/n_1) \nonumber \\
\overline{Y}_{2j} \mid  \beta_{1j}, \delta_j, \sigma & \sim \mbox{Normal}(\beta_{1j} + \delta_j, \sigma^2/n_2),
\end{align*}
for each vector component $j \in \{1,\dots,p\}$.  Next, let $Z_j := \overline{Y}_{2j} - \overline{Y}_{1j}$ and $\sigma_n^2 := \sigma^2 (\frac{1}{n_1} + \frac{1}{n_2})$ to reparameterize the above model as 
\begin{equation}\label{eq: Z_model}
\begin{split}
\overline{Y}_{1j}  \mid  \beta_{1j}, \sigma & \sim \mbox{Normal}(\beta_{1j}, \sigma^2/n_1) \\
Z_j  \mid  \delta_{j}, \sigma & \sim \mbox{Normal}(\delta_j, \sigma_n^2).
\end{split}
\end{equation}
In this parameterization, $\overline{Y}_{1j}$ and $Z_j$ are sufficient statistics for $\beta_{1j}$ and $\delta_j$, respectively, and so we construct a posterior distribution for $\delta_{j}$ conditional on $\sigma$, based on data generating model \eqref{eq: Z_model}.  In particular, for each $j \in \{1,\dots,p\}$, we specify the HS-like prior
\begin{align}\label{eq: full_hrc_horseshoe}
\delta_j \mid  \lambda_j, \tau, \sigma & \sim \mbox{Normal}(0, \lambda_j^2\tau^2\sigma_n^2) \\
\lambda_j & \sim \mbox{Cauchy}_{+}(0,1). \nonumber 
\end{align}

The prior specification \eqref{eq: full_hrc_horseshoe} on $\delta_j$ depends on two additional scale parameters.  The {\em global shrinkage} parameter $\tau$ can be tailored to be small to encourage sparsity in the posterior concentration for $\bdelta$, while the {\em local} parameters $\lambda_1,\dots,\lambda_p$ allow large signals to escape the global shrinkage.  In the edge case that $\tau=0$ it follows that $\bdelta=\mathbf{0}$ and $\bbeta_{2} = \bbeta_{1}$. In a fully Bayesian specification, an uninformative prior can be placed on the global shrinkage parameter, $\tau$, such as a Half-Cauchy prior, as advocated for in \cite{carvalho2009handling}; a commonly placed uninformative prior on the error variance, $\sigma^2$, is based on an \mbox{Inverse-Gamma} distribution. Alternatively, we can view  $\tau$ as a hyperparameter and estimate it using empirical Bayes methods or marginal maximum-likelihood procedures. Though, it is noted in \cite{vanderpas_horsehoe} that marginal maximum-likelihood estimates run the risk of collapsing to zero, and so the authors propose constraining possible values of $\tau$ to $(1/p, \infty)$. 

Based on the formulation of \eqref{eq: Z_model} and \eqref{eq: full_hrc_horseshoe}:
\begin{align}\label{eq: cond_posterior_delta}
\delta_j \mid Z_j, \lambda_j, \tau, \sigma & \sim \mbox{Normal}\{(1-\kappa_j)Z_j, (1-\kappa_j)\sigma_n^2\},
\end{align}
where $\kappa_j := 1/(1+\lambda_j^2\tau^2)$.  If $\lambda_j$ is large then $\kappa_j$ is close to zero, minimizing the TL; conversely, small $\lambda_j$ implies $\kappa_j$ is close to one and the conditional posterior for $\delta_j$ is nearer a point mass at zero, corresponding to a high degree of TL.  The interpretation of the posterior shrinkage parameter $\kappa_j$ is discussed further in \cite{carvalho2009handling}.  Recalling that $\bdelta := \bbeta_{2} - \bbeta_{1}$, if $\hat{\beta}_{1j}$ denotes a point estimate for $\beta_{1j}$, then a point estimate for $\beta_{2j}$ is $\hat{\beta}_{2j} := \hat{\delta}_j + \hat{\beta}_{1j}$, where $\hat{\delta}_j := \EX[\delta_j \mid  Z_j, \tau, \sigma]$ is a Bayes estimator for $\delta_j$.  

In Section \ref{s:theory}, we derive the asymptotic properties of the Bayes estimator $\hat{\delta}_j$ under assumptions of suitable values of $\tau$.  To do so, we first need to integrate out $\kappa_j$ from the posterior described by \eqref{eq: cond_posterior_delta}. The marginal posterior of $\kappa_j$ conditional on $\tau$ and $\sigma$ has density function
\[
p(\kappa_j \mid  Z_j, \tau, \sigma) \propto (1-\kappa_j)^{-\frac{1}{2}}\frac{1}{1-(1-\tau^2)\kappa_j} e^{-\frac{Z_j^2}{2\sigma_n^2}\kappa_j},
\]
and so with reference to equation \eqref{eq: cond_posterior_delta},
\begin{align} \label{eq: post_mean_delta}
\hat{\delta}_j & = \EX[1-\kappa_j \mid Z_j, \tau, \sigma]\cdot Z_j \nonumber \\
               & = Z_j \cdot \frac{ \int_0^1 (1-\kappa)^{\frac{1}{2}}\frac{1}{1-(1-\tau^2)\kappa} e^{-\frac{Z_j^2}{2\sigma_n^2}\kappa} d\kappa } {\int_0^1 (1-\kappa)^{-\frac{1}{2}}\frac{1}{1-(1-\tau^2)\kappa} e^{-\frac{Z_j^2}{2\sigma_n^2}\kappa} d\kappa} \nonumber \\
               & = Z_j \cdot \frac{ \int_0^1 u^{\frac{1}{2}}\frac{1}{\tau^2+(1-\tau^2)u} e^{-\frac{Z_j^2}{2\sigma_n^2}(1-u)} du } {\int_0^1 u^{-\frac{1}{2}}\frac{1}{\tau^2+(1-\tau^2)u} e^{-\frac{Z_j^2}{2\sigma_n^2}(1-u)} du} \, \mbox{, where  $u = 1-\kappa$} \nonumber \\
               & = Z_j \cdot w_{\tau, \sigma_n^2}(Z_j),
\end{align}
where
\begin{equation}\label{eq: shrinkage_weight}
    w_{\tau, \sigma_n^2}(Z_j) := \frac{ \int_0^1 u^{\frac{1}{2}}g(u) e^{\frac{Z_j^2}{2\sigma_n^2}u} du } {\int_0^1 u^{-\frac{1}{2}}g(u) e^{\frac{Z_j^2}{2\sigma_n^2}u} du}, \quad \mbox{ with } \quad g(u) := \frac{1}{\tau^2+(1-\tau^2)u}.
\end{equation}
Accordingly, the effect of the shrinkage prior on the Bayes estimator is to scale $Z_j = \overline{Y}_{2j} - \overline{Y}_{1j}$ by the weight $w_{\tau, \sigma_n^2}(Z_j) \in (0,1)$, i.e., according to a TL mechanism depending on $\tau$ and $\sigma_{n}$. Furthermore, $w_{\tau, \sigma_n^2}(Z_j)$ is symmetric as a function of $Z_j$ and strictly increasing as a function of the magnitude, $|Z_j|$, for all $j \in \{1,\dots,p\}$.  That being so, the bias in this Bayes estimator $\hat{\delta}_j$ vanishes for large signals because $w_{\tau, \sigma_n^2}(Z_j) \rightarrow 1$ as $|Z_j| \rightarrow \infty$, whereas $\hat{\delta}_j$ shrinks for small signals $|Z_j|$ and can be made arbitrarily small via choosing $\tau$ appropriately close to zero.

For a fully Bayesian treatment of the model, we need to integrate over $\tau$ and $\sigma^2$ in \eqref{eq: post_mean_delta} to get the marginal posterior mean $\EX[\delta_j\mid Z_j]$; to do so, we rely on Markov chain Monte Carlo (MCMC) sampling. It is shown in \cite{van2017adaptive} that the HS prior retains its optimal risk properties under both fully Bayesian estimation and marginal maximum likelihood estimation, with plug-in estimates truncated to the left by $1/p$.  Based on these facts, in a sparse $\bdelta$ scenario, we expect our two-stage estimator to achieve a low error rate on the target task parameters.  Abundant source task data guarantees accurate first-stage estimates.

\subsection{Bounded norm case}\label{s: bounded_means}
In this section, we relax to the assumption that $\bdelta$ need only be bounded in norm, i.e., $\norm{\bdelta} := (\sum_{j=1}^p \delta_j^2)^{1/2} < C$ for some unknown constant $C$.  Since sparsity is no longer assumed, only global shrinkage is warranted, and we consider the global shrinkage prior $\bdelta \mid \tau \sim \mbox{Normal}(\mathbf{0}, \tau^2\bI_p)$.

After the first stage we have the posterior $\beta_{1j} \mid \overline{Y}_{1j}, \sigma \sim \mbox{Normal}(\overline{Y}_{1j}, \sigma^2/n_1)$. This posterior forms the prior in the second-stage estimation:
\begin{align}\label{eq: model_hrc_bounded}
\overline{Y}_{2j} \mid \beta_{2j}, \sigma & \sim \mbox{Normal}(\beta_{2j}, \sigma^2/n_2) \nonumber \\
\beta_{2j} \mid \overline{Y}_{1j}, \sigma, \tau & \sim \mbox{Normal}(\overline{Y}_{1j}, \sigma^2/n_1 + \tau^2).  
\end{align}
The choice of the prior on $\tau^2$ will determine how heavy the tails are as well how sharp the mode is around $\overline{\bY}_1$. To examine the effect of this prior on the posterior of $\bbeta_2$ we first look at the conditional posterior:
\begin{align}
\beta_{2j} \mid  \overline{Y}_{2j}, \overline{Y}_{1j}, \tau & \sim \mbox{Normal}(\mu_j(\tau^2), v_j(\tau^2)) \nonumber \\
\mu_j(\tau^2) & = \meanbetats \nonumber  \\ 
v_j(\tau^2) & = \varbetats, \nonumber 
\end{align}
where $\overline{Y}_{Tj} := (n_1\overline{Y}_{1j} + n_2\overline{Y}_{2j})/n_T$, $n_T := n_1 + n_2$, and $q_n := \qn$. Let $\kappa := \kapatau$. If $\tau^2=0$ then $\kappa = 0$ and the model collapses to pooling the two data sets. On the other hand, if $\tau^2 =\infty$ then $\kappa = 1$ and the model is equivalent to the analysis with only the target data. Hence, $\tau^2$ controls the influence of the source model in the analysis. A prior penalizing large values of $\tau^2$ will in turn penalize deviation from the source estimates.

We use the penalized complexity prior (PCP) \citep{10.1214/16-STS576} on $\tau^2$. In the PCP framework, we first start with a base model, which in our case corresponds to $\tau^2 = 0$, i.e., $\bbeta_2 = \bbeta_1$.  Next, we penalize the deviance of the extended model, corresponding to $\tau^2 >0$, from the base model through an exponential prior on the square root of the Kullback-Leibler (KL) divergence between the two. Let $\pi_0$ and $\pi_\tau$ denote the prior distribution on $\bbeta_2$ in \eqref{eq: model_hrc_bounded} for the base and extended models, respectively. Let $\tilde{\tau}^2 := \tau^2/(\sigma^2/n_1)$, and define the scaled KL divergence between the two models as
\begin{equation}\label{eq: kl_div}
    d(\tildetau^2) := \sqrt{\frac{2}{p} \kll } = \sqrt{\tildetau^2 - \log(1+\tildetau^2)}.
\end{equation}
Clearly $d$ is a strictly increasing function of $\tildetau^2$. Furthermore, if $\tildetau^2=0$ then $d(\tildetau^2) = 0$ and the model collapses to the base model. Assigning an exponential distribution on $d(\tildetau^2) \sim \rm{Exp}(\lambda)$ induces a  prior on $\tildetau^2$ with density
\[
p(d(\tildetau^2))  \times \left| d'(\tildetau^2) \right| = \frac{\lambda\tildetau^2}{1+\tildetau^2} \frac{e^{-\lambda  \sqrt{\tildetau^2 - \log(1+\tildetau^2)}}}{2\sqrt{\tildetau^2 - \log(1+\tildetau^2)}}.
\]

It can be shown that this induced prior on $\tildetau^2$ has a mode at zero with a strictly decreasing density as $\tildetau^2$ increases. Also, the tail of the prior behaves like a modified Weibull distribution \citep{almalki2014modifications} with rate $\lambda$ and shape $0.5$. The hyper-parameter $\lambda$ controls the magnitude of the penalty for deviating from the base model, and can be set using prior belief on the difference between the two tasks.  Properties of the PCP prior in estimating a vector of normal-means are studied in \cite{10.1214/16-STS576}, and it is concluded that PCP priors will assign more mass corresponding to intermediate shrinkage, which is appropriate for cases when the signals are small. Under the current assumption that $\norm{\bdelta}$ is bounded, when $p$ is large all the entries $\delta_j$ must be small, leading to our choice of the PCP prior.

\section{Theoretical properties}\label{s:theory}

In this section, we study the risk of the second-stage estimator in the sparse case. For notational clarity, subscripts of expectation operators specify the parameter value(s) for the data distribution that the expectation is with respect to.  All proofs are given in Appendix A.1.

Let $\bbeta_1^o$ and $\bbeta_2^o$ be the true values of source and target means, respectively, and let $\bdelta^o := \bbeta_2^o - \bbeta_1^o$. First we consider the case where $p = o(n_1)$ and $n_2$ is fixed to reflect a setting with abundant source data but limited target data.  Without loss of generality, assume $n_2 = 1$. In this case the MLE of $\bbeta_1$ is $\hat{\bbeta}_1 = \overline{\bY}_1$ and
\begin{equation}\label{eq: mle_consistency}
\Riskrt{\bbeta_1^o}{\overline{\bY}_1}{\bbeta_1^o} = \frac{p}{n_1}\sigma^2 \longrightarrow 0,
\end{equation}
as $p, n_1 \to \infty$, i.e., the MLE is consistent for estimating the source task parameters.  We will now argue that the risk for estimating $\bbeta_2$ using our two-stage approach attains the minimax risk for estimating a sparse vector of means, where all but $q := \norm{\bdelta^o}_0$ elements of $\bdelta^o$ are exactly zero with $q = o(p)$. Recall that $\hat{\bbeta}_2 = \overline{\bY}_1 + \hat{\bdelta}$, where $\hat{\bdelta} = \EX[\bdelta\mid \bZ, \tau, \sigma]$. 

With the assumption that $q = o(p)$, the minimax risk is attained by $\hat{\bdelta}$ for $\tau = (q/p)^\alpha$ with a suitable choice of $\alpha\geq 1$ \citep{vanderpas_horsehoe}:
\begin{equation}\label{eq: hs_minimax}
\Riskrt{\bdelta^o}{\hat{\bdelta}}{\bdelta^o} \leq \sigma_n^2 \cdot q \log(p/q)\{1 + o(1)\}.
\end{equation}
Accordingly, the expressions in \eqref{eq: mle_consistency} and \eqref{eq: hs_minimax} can be used to express/bound the first two terms in the following expression of the risk of the second-stage estimator for the target task:
\begin{equation}\label{eq: target_risk}
\Riskrt{\bbeta_2^o}{\hat{\bbeta}_2}{\bbeta_2^o} = \frac{p}{n_1}\sigma^2 + \Riskrt{\bdelta^o}{\hat{\bdelta}}{\bdelta^o} + 2\EX_{\bbeta_1^o, \bdelta^o}[ (\overline{\Y}_1  - \bbeta_1^o)^{T} (\hat{\bdelta}  - \bdelta^o)],
\end{equation}
where the rightmost term is strictly negative, as established in Lemma \ref{cl: cross_neg}.
  
\begin{lemma}\label{cl: cross_neg}
Assuming $\overline{\Y}_1 \sim \mbox{Normal}(\bbeta_1^o,\sigma^2I_p)$ independent of $\overline{\Y}_2\sim\mbox{Normal}(\bbeta_2^o,\sigma^2I_p)$ with $\overline{\bdelta} = \overline{\Y}_2-\overline{\Y}_1$ and $\bdelta_0=\bbeta_2^o-\bbeta_1^o$, then
$$\EX_{\bbeta_1^o, \bdelta^o}[ (\overline{\Y}_1  - \bbeta_1^o)^{T} (\hat{\bdelta}  - \bdelta^o)] < 0.$$
\end{lemma}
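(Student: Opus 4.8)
The plan is to exploit independence across coordinates to reduce to a scalar statement, and then to combine the \emph{negative} correlation between $\overline{Y}_{1j}-\beta_{1j}^o$ and $Z_j=\overline{Y}_{2j}-\overline{Y}_{1j}$ with Stein's lemma and the monotonicity of the Bayes rule. Because both the data and the prior factorize over the $p$ coordinates, each $\hat\delta_j$ depends only on $Z_j$, and the cross term splits as $\EX_{\bbeta_1^o,\bdelta^o}[(\overline{\Y}_1-\bbeta_1^o)^{T}(\hat{\bdelta}-\bdelta^o)]=\sum_{j=1}^p \EX[(\overline{Y}_{1j}-\beta_{1j}^o)(\hat\delta_j-\delta_j^o)]$. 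It therefore suffices to show that each scalar summand is strictly negative.

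Fix a coordinate and abbreviate $A:=\overline{Y}_{1j}-\beta_{1j}^o$ and $Z:=Z_j$, writing $\hat\delta_j(Z)=Z\,w_{\tau,\sigma_n^2}(Z)$ as in \eqref{eq: shrinkage_weight}. The pair $(A,Z)$ is jointly Gaussian with $\EX[A]=0$, and the decisive sign comes from $\mathrm{Cov}(A,Z)=\mathrm{Cov}(\overline{Y}_{1j}-\beta_{1j}^o,\,\overline{Y}_{2j}-\overline{Y}_{1j})=-\Var(\overline{Y}_{1j})<0$, since $\overline{Y}_{1j}$ enters $Z$ with a negative sign while $\overline{Y}_{1j}$ and $\overline{Y}_{2j}$ are independent. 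As $\EX[A]=0$ and $\delta_j^o$ is a constant, the summand equals $\EX[A\,\hat\delta_j(Z)]$.

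Next I condition on $Z$. By joint Gaussianity, $\EX[A\mid Z]=\tfrac{\mathrm{Cov}(A,Z)}{\Var(Z)}(Z-\delta_j^o)$, so by the tower property $\EX[A\,\hat\delta_j(Z)]=\tfrac{\mathrm{Cov}(A,Z)}{\Var(Z)}\,\EX[(Z-\delta_j^o)\hat\delta_j(Z)]$. Applying Stein's lemma to $Z\sim\mbox{Normal}(\delta_j^o,\Var(Z))$ with the function $\hat\delta_j$ — whose at-most-linear growth, since $w_{\tau,\sigma_n^2}\in(0,1)$ gives $|\hat\delta_j(Z)|\le|Z|$, secures the integrability and vanishing boundary terms — yields $\EX[(Z-\delta_j^o)\hat\delta_j(Z)]=\Var(Z)\,\EX[\hat\delta_j'(Z)]$. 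Hence the summand collapses to $\mathrm{Cov}(A,Z)\,\EX[\hat\delta_j'(Z)]$, and since $\mathrm{Cov}(A,Z)<0$ the proof reduces to showing $\EX[\hat\delta_j'(Z)]>0$.

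The closing step is the monotonicity of the Bayes estimator. Since $\hat\delta_j(Z)=\EX[\delta_j\mid Z]$ is a posterior mean in a Gaussian location model, the Tweedie/Brown identity gives $\hat\delta_j'(Z)=\Var(\delta_j\mid Z)/\sigma_n^2\ge 0$, strictly positive wherever the posterior is non-degenerate; equivalently, one may invoke the monotonicity of $Z\mapsto Z\,w_{\tau,\sigma_n^2}(Z)$ noted after \eqref{eq: shrinkage_weight}. This forces $\EX[\hat\delta_j'(Z)]>0$, so every summand is strictly negative and the lemma follows. I expect the main obstacle to lie precisely in this last step: rigorously establishing $\hat\delta_j'\ge 0$ and verifying the integrability and boundary conditions required for Stein's lemma under the heavy-tailed horseshoe prior. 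The algebra, by contrast, is routine once the negative covariance $\mathrm{Cov}(A,Z)<0$ has been identified as the structural source of the sign.
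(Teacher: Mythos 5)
Your proof is correct, and it follows the paper's skeleton up to the last step: the same coordinatewise decomposition, the same reduction of the cross term via $\EX[A\mid Z]=\tfrac{\mathrm{Cov}(A,Z)}{\Var(Z)}(Z-\delta_j^o)$ (the paper writes this coefficient explicitly as $-\tfrac{1}{1+n_1}$ with $n_2=1$), and the same application of Stein's lemma to $h(z)=z\,w_{\tau,\sigma_n^2}(z)$. Where you genuinely diverge is in proving $\EX[h'(Z)]>0$. The paper differentiates the ratio of integrals explicitly, obtaining \eqref{eq: h_derivative}, and shows the second term is nonnegative by Cauchy--Schwarz, yielding the pointwise bound $h'(z)\ge w_{\tau,\sigma_n^2}(z)>0$; it also splits off the case $\delta_j^o=0$, where positivity is immediate from $\EX[Z_j^2w_{\tau,\sigma_n^2}(Z_j)]>0$. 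You instead invoke the Tweedie/Brown identity $h'(z)=\Var(\delta_j\mid Z_j=z)/\sigma_n^2>0$, which handles both cases uniformly and works for any proper prior, not just the horseshoe. The two arguments are in fact the same identity in disguise: with $u=1-\kappa$, the law of total variance gives $\Var(\delta_j\mid z)=\sigma_n^2\,\EX[u\mid z]+z^2\Var(u\mid z)$, and the paper's two terms in \eqref{eq: h_derivative} are exactly $\EX[u\mid z]=w_{\tau,\sigma_n^2}(z)$ and $\tfrac{z^2}{\sigma_n^2}\Var(u\mid z)$, the Cauchy--Schwarz gap being precisely $\Var(u\mid z)\ge 0$. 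Your route is cleaner and more conceptual for the lemma as stated; what the paper's explicit computation buys is the quantitative lower bound $h'(z)\ge w_{\tau,\sigma_n^2}(z)$, which is reused in Lemma \ref{cl: cross_neg_cases} to derive the rates $\sigma_n^2\tau\sqrt{\log(1/\tau)}$ and $\sigma_n^2\{1+o(1)\}$ that enter Theorem \ref{thm:main_result}; the qualitative positivity of the posterior variance alone would not suffice there. One small point of rigor: for Stein's lemma you should note that since $h'\ge 0$, the identity holds by Tonelli with both sides possibly infinite, and finiteness then follows from $|h(z)|\le|z|$ on the left side (or from the paper's bound $h'(z)\le 1+z^2/\sigma_n^2$); your remark about ``vanishing boundary terms'' gestures at this but the one-line justification above closes it.
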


Our main result, presented as Theorem \ref{thm:main_result} establishes the risk of the estimator.  The following standard assumptions with respect to large $n_1$ and large $p$ are needed:
\begin{itemize}
\item[(A1)] The MLE for the source task is a consistent estimator of $\bbeta_1^o$, implying that $p = o(n_1)$.
\item[(A2)] The proportion of nonzero elements of $\bdelta^o$ vanishes asymptotically, i.e., $q=o(p)$.
\item[(A3)]  $\tau = (q/p)^\alpha$, with $\alpha\geq 1$.
\item[(A4)] The nonzero entries in $\bdelta^0$ have magnitude exceeding the minimum threshold necessary for signal recovery: $\sqrt{2\sigma^2_n\log(p/q)}$.
\end{itemize}

\begin{theorem}\label{thm:main_result}
Under assumptions (A2), (A3), and (A4),
\begin{equation}\label{eq: risk_bound}
\Riskrt{\bbeta_2^o}{\hat{\bbeta}_2}{\bbeta_2^o} \le \frac{p}{n_1}\sigma^2 + \Big[\sigma_n^2 \cdot q \log(p/q) - \frac{\sigma^2}{n_1}\big\{q + (p-q)\tau\sqrt{\log(1/\tau)}\big\}\Big]\{1 + o(1)\}.
\end{equation}
Furthermore, if assumption (A1) is satisfied, then 
\begin{equation}\label{eq: risk_boun_a1}
    \Riskrt{\bbeta_2^o}{\hat{\bbeta}_2}{\bbeta_2^o} \le \sigma_n^2 \cdot q\log(p/q)\{1 + o(1)\}.
\end{equation}
\end{theorem}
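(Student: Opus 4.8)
The plan is to work from the exact risk decomposition \eqref{eq: target_risk}, which splits the target risk into the source-estimation risk $\frac{p}{n_1}\sigma^2$, the shrinkage risk $\Riskrt{\bdelta^o}{\hat{\bdelta}}{\bdelta^o}$, and twice the cross term $\EX_{\bbeta_1^o,\bdelta^o}[(\overline{\Y}_1-\bbeta_1^o)^{\T}(\hat{\bdelta}-\bdelta^o)]$. The first summand is exact. For the second, I would invoke the horseshoe minimax bound \eqref{eq: hs_minimax} of \cite{vanderpas_horsehoe}, valid under (A2)--(A4), giving $\Riskrt{\bdelta^o}{\hat{\bdelta}}{\bdelta^o}\le \sigma_n^2 q\log(p/q)\{1+o(1)\}$. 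All the real work lies in producing a sharp, signed bound on the cross term: Lemma \ref{cl: cross_neg} already guarantees it is negative, so \eqref{eq: risk_boun_a1} would follow at once by discarding it, but the tighter bound \eqref{eq: risk_bound} requires quantifying it.

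To quantify the cross term I would exploit that everything factorizes over coordinates and is Gaussian. Writing $X_j := \overline{Y}_{1j}-\beta^o_{1j}\sim\mbox{Normal}(0,\sigma^2/n_1)$ and noting that $\hat\delta_j = Z_j\, w_{\tau,\sigma_n^2}(Z_j)$ depends on the data only through $Z_j = \delta^o_j + (W_j - X_j)$, in which $X_j$ enters linearly with coefficient $-1$, I would apply Gaussian integration by parts (Stein's identity) in $X_j$ with $W_j := \overline{Y}_{2j}-\beta^o_{2j}$ held fixed. Since $\EX[X_j]=0$ this gives $\EX[X_j(\hat\delta_j-\delta^o_j)] = \EX[X_j\hat\delta_j] = -\frac{\sigma^2}{n_1}\EX[\hat\delta_j'(Z_j)]$, and by Tweedie's identity $\hat\delta_j'(z) = \Var(\delta_j\mid Z_j = z)/\sigma_n^2$ for the model \eqref{eq: Z_model}, so the full third summand reduces to $-\frac{2\sigma^2}{n_1\sigma_n^2}\sum_{j=1}^p \EX[\Var(\delta_j\mid Z_j)]$. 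This rewrites the entire correction through the expected horseshoe posterior variance and makes its negativity transparent.

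The remaining step is to bound $\sum_j\EX[\Var(\delta_j\mid Z_j)]$ from below by splitting coordinates into the $q$ signals and the $p-q$ nulls. For a signal coordinate, (A4) forces $|\delta^o_j|$ above the detection threshold, so $w_{\tau,\sigma_n^2}(Z_j)\to 1$ and the conditional posterior \eqref{eq: cond_posterior_delta} concentrates on $\mbox{Normal}(Z_j,\sigma_n^2)$; hence $\EX[\Var(\delta_j\mid Z_j)] = \sigma_n^2\{1+o(1)\}$, which produces the $q$ term. For a null coordinate $Z_j\sim\mbox{Normal}(0,\sigma_n^2)$, and the delicate horseshoe asymptotics near the origin from \cite{vanderpas_horsehoe} give $\EX[\Var(\delta_j\mid Z_j)]$ of order $\sigma_n^2\,\tau\sqrt{\log(1/\tau)}$, producing the $(p-q)\tau\sqrt{\log(1/\tau)}$ term. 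Assembling the three pieces yields \eqref{eq: risk_bound}. I expect the null-coordinate estimate to be the main obstacle: controlling $\EX[\Var(\delta_j\mid Z_j)]$ uniformly over the bulk of $Z_j$ demands the same Laplace-type analysis of the integrals in \eqref{eq: shrinkage_weight} that underlies the horseshoe risk bounds, and is precisely where the multiplicative constant in the correction must be tracked carefully.

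Finally, \eqref{eq: risk_boun_a1} follows from \eqref{eq: risk_bound} by discarding the non-positive bracketed correction and invoking (A1): since $p=o(n_1)$ while $\sigma_n^2 q\log(p/q)\asymp \sigma^2 q\log(p/q)/n_2$ stays bounded away from zero (with $n_2$ fixed and $q\ge 1$), the leftover source term satisfies $\frac{p}{n_1}\sigma^2 = o\big(\sigma_n^2 q\log(p/q)\big)$ and is absorbed into the $\{1+o(1)\}$ factor, leaving the claimed minimax bound.
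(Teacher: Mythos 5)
Your proposal is correct and takes essentially the same route as the paper: the same decomposition \eqref{eq: target_risk}, the same invocation of the horseshoe bound \eqref{eq: hs_minimax} under (A2)--(A4), and the same Stein-identity quantification of the cross term split over the $q$ signal and $p-q$ null coordinates, with the null case resting on Theorem 3.4 of \cite{vanderpas_horsehoe} exactly as in the paper's Lemma \ref{cl: cross_neg_cases}, and the final claim \eqref{eq: risk_boun_a1} obtained by the same absorption of $\frac{p}{n_1}\sigma^2$ under (A1). Your Tweedie identification $\hat{\delta}_j'(z)=\Var(\delta_j\mid Z_j=z)/\sigma_n^2$ is a slick repackaging of the paper's explicit formula \eqref{eq: h_derivative} (which the paper reaches by first computing $\EX[\overline{Y}_{1j}-\beta_{1j}^o\mid Z_j]$ and then applying Stein in $Z_j$, with Cauchy--Schwarz certifying nonnegativity of the variance term), and it even retains the factor of $2$ on the negative correction that the paper's stated bound conservatively drops, so your version implies \eqref{eq: risk_bound} a fortiori.
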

It follows from the bound in \eqref{eq: risk_bound} that the overall risk of estimating $\bbeta_2$ is less than the sum of the risks from the two stages (because the third term is negative). The first term on the right is the risk of the MLE of $\bbeta_1$; if $n_1 = O(p)$ then this term is dominated by a constant, and so the risk is dominated by the minimax risk for the estimation of the sparse vector of differences, $\bdelta$.  Next, consider a comparison of the risk of our proposed estimator to the best possible estimator using only the target task. According to \cite{pinsker1980optimal}, the minimax risk for $\bbeta_2$ using only the target data is 
\[
r^* = \frac{p}{n_2}\frac{\sigma^2M^2}{\sigma^2 + M^2} = p\frac{\sigma^2M^2}{\sigma^2 + M^2}, 
\]
for $M$ such that $\norm{\bbeta_2^o} \leq M$.  From \eqref{eq: risk_bound}, it is evident that even in the case where $n_1 = o(p)$ 
\begin{equation}\label{eq: sparse_relative}
    \sup_{\bbeta^o_2}\Risk{\bbeta_2^o}{\hat{\bbeta}_2} = o(r^*).
\end{equation}
Thus, TL of source task information via using an appropriate shrinkage prior leads to an estimator with negligible risk, relative to estimators using only the target data.  Lastly, if the MLE for the source task is consistent, i.e., $p=o(n_1)$, then \eqref{eq: risk_boun_a1} shows that our proposed estimator for $\bbeta_2$ asymptotically attains the minimax risk for a sparse normal-means vector.

\section{Simulation study}\label{s:sim}

Here we apply the methods described in Section \ref{s:method} to simulated data to shed light on the performance of the proposed two-stage TL methods. We illustrate the performance of the methods in the normal-means case for both sparse and bounded settings.

\subsection{Sparse $\bdelta$}

\subsubsection{Data generation}\label{s:s:sim:gen_sparse}
The dimension of the mean vectors will take values $p\in\{100, 500, 10^3, 5\cdot10^3, 10^4\}$ and the number of observations in the target task will take values $n_2\in\{1, 5, 30\}$.  The noise level is $\sigma = 1$ and for the source-task true means $\bbeta^{0}$, the entries will be independently simulated from a uniform distribution on the interval $(-3,3)$. We set the number of nonzero entries in $\bdelta^{0}$ equal to $q := \lceil p^{1-\alpha}\rceil$ for some $\alpha \in (0,1)$ so $q=o(p)$. The entries of $\bdelta^{0}$ follow
\[
\left\{
\begin{array}{ll}
     \delta_{j}^{0} \sim \mbox{Normal}(5, 1)  &   1\leq j \leq q  \\
     \delta_{j}^{0} = 0   &  q < j \leq p. 
\end{array}
\right.
\]
We then set the target means to $\bbeta_2^0 := \bbeta_1^0 + \bdelta^0$ so the source and target model means are the same except for the first $q$ entries. The source task sample size will also vary according to the parameter dimension. We set $n_1 := \lceil p^{1+\gamma}\rceil $, where $\gamma \in (-1, 1)$ so $p=o(n_1)$ when $\gamma$ is positive, $n_1=o(p)$ when $\gamma$ is negative, and $p=O(n_1)$ when $\gamma=0$. Finally, given $\left(\bbeta_1^0, \bdelta, n_1, n_2, \sigma\right)$ we generate the source and target sample means using the following distributions:
\begin{align*}
\overline{Y}_{1j} \sim & \mbox{Normal}(\beta^0_{1j}, \sigma^2/n_1) \\
\overline{Y}_{2j} \sim & \mbox{Normal}(\beta^0_{1j} + \delta^0_j, \sigma^2/n_2).
\end{align*}
We simulated 1,000 data sets for each combination of $p$, $n_2$, and $q$.

\subsubsection{Competing methods and metrics}\label{s:s:sim:methods}

To compare the performance of our proposed estimator, we consider three different estimation methods for the target task parameters, two of which fall within the two-stage framework. In the first stage, we estimate the source data means vector $\hat{\bbeta}_1$, and in the second stage an estimate of the means difference $\hat{\bdelta}$ is computed using $\hat{\bbeta}_1$. For the two-stage methods, we will further examine the effect of the choice of the first-stage estimate. Depending on the value of $\gamma$, we will consider two different choices. If $\gamma >0$, we will only use the MLE $\hat{\bbeta}_1^{ML} = \overline{\bY}_1$ since it is consistent and unbiased. When $\gamma \leq 0$, $\overline{\bY}_1$ will no longer be consistent; in this case we consider both $\hat{\bbeta}_1^{ML}$ and the James-Stein (JS) estimator
 \begin{equation}\label{eq: js_estimator_nm}
\hat{\bbeta_1}^{JS} = \left(1 - \frac{(p-2)\frac{\sigma^2}{n_1}}{\norms{\hat{\bbeta}_1^{ML}}} \right)\hat{\bbeta}_1^{ML}.    
\end{equation}
The JS estimator, although biased, uniformly dominates the MLE and is asymptotically minimax under the assumptions of the Pinsker theorem. 

For the second stage, i.e., the target data, the three different methods, including our proposed estimator for $\bdelta$ are 
\begin{align*}
    \hat{\bdelta}^{ML} & = \bZ \\
    \hat{\bdelta}^{JS} & = \left(1 - \frac{(p-2)\sigma_n^2}{\norms{\bZ}} \right)\overline{\bZ} \\
    \hat{\bdelta}^{HS} & = \{ (1-\EX[\kappa_j \mid Z_j])Z_j \}_{1\leq j \leq p},
\end{align*}
where $Z_j := \overline{Y}_{2j} - \hat{\beta}_{1j}$.  Note that in all cases $\hat{\bdelta}$ depends on the first-stage estimation method. Computational details are given in Appendix A.2.

The main competing method is the Trans-Lasso algorithm \citep{li2022transfer}. The Trans-Lasso algorithm proceeds by first combining the two tasks and estimating a common means vector using the Lasso \citep{tibshirani1996regression} operator. Then uses the Lasso again to estimate the means of the residuals in the target task only. The final Trans-Lasso estimate is the sum of the two-stage estimates.  We compare all the different methods using mean squared error (MSE) for $\bbeta_2$.

\subsection{Bounded $\norm{\bdelta}$}

\subsubsection{Data generation}\label{s:s:sim:gen_bounded}

For the bounded case, the data generation is the same as in Section \ref{s:s:sim:gen_sparse} except for $\bdelta^o$. The true vales of $\bdelta ^o$ are simulated to satisfy $\norm{\bdelta^{0}} \le C = 3\sqrt{p}$. To generate the true values that fall within this $p$-dimensional ball, we first generate $\bW := \{w_i\}_{1\leq i \leq p} \stackrel{iid}{\sim}\mbox{Normal}(0,1)$, and then set $\bdelta^o := CU^{1/p}\bW/\|\bW\|$, where $U\sim \mbox{Uniform}(0,1)$.

\subsubsection{Competing methods}\label{s:s:sim:competing_bounded}

In the bounded case, we consider two competing TL methods along with a target-data-only estimator using the JS method. The main competing TL method is the soft parameter sharing (SPS) method of \cite{yang2023analysis}. The SPS method applies a ridge penalty to learn $\bdelta$, and optimizes the squared error loss for the two tasks jointly.  The properties of SPS are studied in \cite{yang2023analysis} under the bounded difference case, and the authors show that when $\norm{\bdelta} = O(\sigma^2p/n_2)$ the SPS risk is negligible, compared to minimax risk using the target-only data. The second method we consider is to use the JS estimator in the second stage.

\subsection{Results}\label{s:s:sim:results}

These simulation studies suggest that our proposed method outperforms state-of-the-art TL methods in both scenarios. An interesting empirical finding from the simulation study is that an unbiased first-stage estimate is crucial even when the number of parameters is larger than the size of the source task data. In fact, using a biased source task estimate (with a lower risk) results in considerably larger MSE for the target task parameters, particularly in the sparse case. This can be explained by noticing that the original formulation of the priors involves the difference between the true parameter vectors. Hence, when an efficient but biased estimate is used in the first stage, most of the prior mass ends up in the wrong region of the parameter space.

\subsubsection{Sparse $\bdelta$} 

Table \ref{t:sim} summarizes the results for all methods when the source data sample size $n_1$ is larger than the dimension $p$ (i.e., $\gamma>0$). In this case the first-stage estimate is exactly the MLE $\overline{\bY}_1$. We see that as $p$ increases the two-stage HS method outperforms all the other candidates including the Trans-Lasso method.

\begin{table}[H]
\centering
{\footnotesize
\begin{tabular}{lll|cccc|c}
\multicolumn{3}{c}{} & \multicolumn{4}{c}{Methods}  \\
$p$ & $q$ & $n_2$ & MLE & JS & HS &  Trans-Lasso & Max SE\\ \hline
100 & 40 & 1 & 0.99 & 0.91 & 0.76 & 0.82 &  $1.6\cdot10^{-3}$ \\
500 & 145 & 1 & 1.01 & 0.88 & 0.59 & 0.74 & $1.2\cdot10^{-3}$ \\
$10^3$ & 252 & 1 & 1.00 & 0.86 & 0.53 & 0.68 & $1.0\cdot10^{-3}$ \\
$5\cdot10^3$ & 911 & 1 & 1.00 & 0.82 & 0.41 & 0.47 & $0.9\cdot10^{-3}$ \\
$10^4$ & 1585 & 1 & 1.00 & 0.80 & 0.36 & 0.40 &  $0.9\cdot10^{-3}$\\
100 & 40 & 5 & 0.22 & 0.19 & 0.11 & 0.16 &  $1.1\cdot10^{-3}$ \\
500 & 145 & 5 & 0.20 & 0.19 & 0.09 & 0.13 & $1.0\cdot10^{-3}$ \\
$10^3$ & 252 & 5 & 0.20 & 0.18 & 0.08 & 0.12 & $0.8\cdot10^{-3}$ \\
$5\cdot10^3$ & 911 & 5 & 0.19 & 0.18 & 0.07 & 0.10 & $0.8\cdot10^{-3}$ \\
$10^4$ & 1585 & 5 & 0.20 & 0.17 & 0.06 & 0.10 & $0.7\cdot10^{-3}$ \\
100 & 40 & 30 & 0.03 & 0.03 & 0.02 & 0.03 & $5.2\cdot10^{-4}$ \\
500 & 145 & 30 & 0.03 & 0.03 & 0.02 & 0.02 & $4.7\cdot10^{-4}$ \\
$10^3$ & 252 & 30 & 0.03 & 0.03 & 0.02 & 0.02 & $4.8\cdot10^{-4}$ \\
$5\cdot10^3$ & 911 & 30 & 0.03 & 0.03 & 0.01 & 0.02 & $4.1\cdot10^{-4}$ \\
$10^4$ & 1585 & 30 & 0.03 & 0.03 & 0.01 & 0.02 & $2.6\cdot10^{-4}$ \\
\end{tabular}
}
\caption{\footnotesize MSE for $\bbeta_2$ for the sparse case with $p = o(n_1)$, $\alpha=0.2$ and $\gamma = 0.2$.  The scenarios vary by the size of the vector of means ($p$), the number of nonzero means ($q$), and the number of target data samples ($n_2$). The final column gives the largest Monte Carlo standard error in each row to gauge statistical significance between methods.\label{t:sim}} 
\end{table}

In Table \ref{t:sim2} we consider the case where $n_1=p$. In this scenario, the MLE is not a consistent estimator of $\bbeta_1$; hence, we also consider the JS estimator for $\bbeta_1$ for the source task. From the MSE results, it is clear that using an unbiased estimator in the first stage results in a lower MSE for the target data means vector for all the methods. The HS estimator has a lower MSE compared to the other methods, with comparable performance to Trans-Lasso in the very large $p$ cases. 

\begin{table}[H]
\centering
{\footnotesize
\begin{tabular}{lll|ccc|ccc|c|c}
 &&& \multicolumn{3}{c|}{First-stage MLE} & \multicolumn{3}{c|}{First-stage JS} & & \\
$p$ & $q$ & $n_2$ & MLE & JS & HS & MLE & JS & HS & Trans-Lasso & Max SE \\ \hline
100 & 40 & 1 & 0.99 & 0.95 & 0.62 & 0.99 & 7.67 & 13.03& 0.85 & 0.77\\
500 & 145 & 1 & 1.00 & 0.93 & 0.52 & 1.00 & 7.82 & 10.25 & 0.71 & 0.82\\
$10^3$ & 252 & 1 & 1.00 & 0.92 & 0.48 & 1.00 & 7.51 & 8.70 & 0.61 & 0.69 \\
$5\cdot10^3$ & 911 & 1 & 1.00 & 0.90 & 0.39 & 1.00 & 6.44 & 5.41 & 0.58 & 0.73\\
$10^4$ & 1585 & 1 & 1.00 & 0.88 & 0.34 & 1.00 & 5.90 & 4.27& 0.44 & 0.64\\
100 & 40 & 5 & 0.22 & 0.20 & 0.11 & 0.20 & 7.24 & 12.44 & 0.18 & 0.98\\
500 & 145 & 5 & 0.20 & 0.20 & 0.09 & 0.20 & 7.46 & 9.68 & 0.15 & 0.94\\
$10^3$ & 252 & 5 & 0.20 & 0.20 & 0.09 & 0.20 & 7.21 & 8.12 & 0.13 & 0.86 \\
$5\cdot10^3$ & 911 & 5 & 0.19 & 0.20 & 0.07 & 0.20 & 6.21 & 4.71 & 0.12 & 0.57\\
$10^4$ & 1585 & 5 & 0.20 & 0.20 & 0.06 & 0.20 & 5.71 & 3.55 & 0.09 & 0.41\\
100 & 40 & 30 & 0.03 & 0.03 & 0.02 & 0.03 & 7.12 & 12.26 & 0.03 & 1.02\\
500 & 145 & 30 & 0.03 & 0.03 & 0.02 & 0.03 & 7.38 & 9.55 & 0.02 & 0.73\\
$10^3$ & 252 & 30 & 0.03 & 0.03 & 0.01 & 0.03 & 7.13 & 7.97 & 0.02 & 0.59\\
$5\cdot10^3$ & 911 & 30 & 0.03 & 0.03 & 0.01 & 0.03 & 6.15 & 4.55 & 0.02 & 0.53\\
$10^4$ & 1585 & 30 & 0.03 & 0.03 & 0.01 & 0.03 & 4.53 & 1.50 & 0.01 & 0.58\\
\end{tabular}
}
\caption{\footnotesize MSE for $\bbeta_2$ for the sparse case with $p = n_1$ and $\alpha=0.2$.  The scenarios vary by the size of the vector of means ($p$), the number of nonzero means ($q$), and the number of target data samples ($n_2$). The final column gives the largest Monte Carlo standard error in each row to gauge statistical significance between methods.\label{t:sim2}}
\end{table}

Finally, Table \ref{t:sim3} shows the same results when $n_1 = o(p)$. Again, we see that the HS estimator based on the MLE in the first stage achieves the best MSE, despite the latter being inconsistent for $\bbeta_1$ and having an expected risk linear in $p$. The results from the different scenarios indicate that, in terms of MSE, the two-stage approach primarily requires an unbiased estimate of the source data means vector.  The risk in the first-stage estimate tends to have a mitigating effect on the target data risk due to the negative correlation between $\hat{\bdelta}$ and $\hat{\bbeta}_1$ (as established by Lemma 1). 

\begin{table}[H]
\centering
{\footnotesize
\begin{tabular}{lll|ccc|ccc|c|c}
 &&& \multicolumn{3}{c|}{First-stage MLE} & \multicolumn{3}{c|}{First-stage JS} &  \\
$p$ & $q$ & $n_2$ & MLE & JS & HS & MLE & JS & HS & Trans-Lasso & Max SE\\ \hline
100 & 40 & 1 & 0.99 & 0.95 & 0.63 & 1.00 & 7.74 & 13.16 & 0.91 & 0.77 \\
500 & 145 & 1 & 1.00 & 0.94 & 0.52 & 1.00 & 7.81 & 10.23 & 0.82 & 0.87\\
1000 & 252 & 1 & 1.00 & 0.93 & 0.49 & 1.00 & 7.51 & 8.69 & 0.73 & 0.71\\
5000 & 911 & 1 & 1.00 & 0.90 & 0.39 & 1.00 & 6.43 & 5.40 & 0.56 & 0.69\\
10000 & 1585 & 1 & 1.00 & 0.89 & 0.35 & 1.00 & 5.90 & 4.27 & 0.43 & 0.56\\
100 & 40 & 5 & 0.20 & 0.20 & 0.12 & 0.20 & 7.24 & 12.44 & 0.17 & 0.96\\
500 & 145 & 5 & 0.20 & 0.20 & 0.10 & 0.20 & 7.47 & 9.70 & 0.15 & 0.85\\
1000 & 252 & 5 & 0.20 & 0.20 & 0.09 & 0.20 & 7.20 & 8.11 & 0.13 & 0.71\\
5000 & 911 & 5 & 0.20 & 0.20 & 0.07 & 0.20 & 6.21 & 4.72 & 0.10 & 0.68\\
10000 & 1585 & 5 & 0.20 & 0.19 & 0.06 & 0.20 & 5.71 & 3.55 & 0.10 & 0.61\\
100 & 40 & 30 & 0.03 & 0.03 & 0.02 & 0.03 & 7.12 & 12.26 & 0.03 & 1.01\\
500 & 145 & 30 & 0.03 & 0.03 & 0.02 & 0.03 & 7.38 & 9.55 & 0.03 & 0.63\\
1000 & 252 & 30 & 0.03 & 0.03 & 0.01 & 0.03 & 7.13 & 7.97 & 0.02 & 0.62\\
5000 & 911 & 30 & 0.03 & 0.03 & 0.01 & 0.03 & 6.15 & 4.55 & 0.02 & 0.56\\
10000 & 1585 & 30 & 0.03 & 0.03 & 0.01 & 0.03 & 4.53 & 1.50 & 0.01 & 0.47\\
\end{tabular}
}
\caption{\footnotesize MSE for $\bbeta_2$ for the sparse case with $n_1 = o(p)$, $\alpha=0.2$ and $\gamma = -0.2$.  The scenarios vary by the size of the vector of means ($p$), the number of nonzero means ($q$), and the number of target data samples ($n_2$). The final column gives the largest Monte Carlo standard error in each row to gauge statistical significance between methods.\label{t:sim3}}
\end{table}

\subsubsection{Bounded $\norm{\bdelta}$}

Table \ref{t:sim_b} summarizes the results for all methods when the source data sample size $n_1$ is larger than the dimension $p$. In this case the first-stage estimate $\overline{\bY}_1$ is consistent for $\bbeta_1$. We notice that the two-stage PCP method  outperforms all the other candidates, with comparable performance to SPS when $p$ is large. In Table \ref{t:sim_b2} we consider the case where $n_1=p$.  From the MSE results, we notice that the PCP estimator outperforms the other TL methods, but is substantially worse than the JS estimator based on the target data only, especially for the large $p$ cases. So, unlike the sparse case settings, abundant source task data is a requirement for all TL methods in the bounded case.

\begin{table}[H]
\centering
{\footnotesize
\begin{tabular}{ll|ccc|c|c}
 \multicolumn{2}{l}{} & \multicolumn{3}{c}{TL Methods} & Target-only &  \\
$p$  & $n_2$ & SPS & JS & PCP & JS & Max SE\\ \hline
100  & 1 &  1.12 & 1.95 & 0.61 & 0.92 &  0.11 \\
500  & 1 & 1.07 & 1.94 & 0.58 & 0.88 & 0.09 \\
$10^3$ & 1 & 0.96 & 1.93 & 0.52 &  0.89 & 0.13\\
$5\cdot10^3$ & 1 & 1.02 & 1.91 & 0.42 & 0.91 & 0.10\\
$10^4$ & 1 & 0.99 & 1.82 & 0.29 & 0.84 &  0.09\\
100 & 5 &  0.42 & 1.11 & 0.19 & 0.19 &  0.04 \\
500 & 5 & 0.45 & 1.09 & 0.19 & 0.21 & 0.06\\
$10^3$ & 5 & 0.39 & 1.08 & 0.18 &  0.20 & 0.08\\
$5\cdot10^3$ & 5 & 0.31 & 0.97 & 0.18 & 0.21 & 0.08\\
$10^4$ & 5 & 0.25 & 0.96 & 0.17 & 0.19 & 0.06\\
100 & 30 &  0.08 & 1.02 & 0.03 & 0.18 &   0.01\\
500 & 30 & 0.09 & 0.82 & 0.03 & 0.03 & 0.01\\
$10^3$ & 30 & 0.04 & 0.91 & 0.02 &  0.03 & $2.4\cdot10^{-3}$ \\
$5\cdot10^3$ & 30 & 0.03 & 0.81 & 0.02 & 0.03 & $1.8\cdot10^{-3}$ \\
$10^4$ & 30 & 0.03 & 0.74 & 0.02 & 0.03 & $1.6\cdot10^{-3}$ \\
\end{tabular}
}
\caption{\footnotesize MSE for the means $\bbeta_2$ in the bounded case with $p=o(n_1)$ and $\gamma = 0.2$. The scenarios vary by the size of the vector of means ($p$), the number of nonzero means ($q$), and the number of target data samples ($n_2$). The final column gives the largest Monte Carlo standard error in each row to gauge statistical significance between methods.\label{t:sim_b}}
\end{table}

\begin{table}[H]
\centering
{\footnotesize
\begin{tabular}{ll|ccc|c|c}
 \multicolumn{2}{l}{} & \multicolumn{3}{c}{TL Methods} & Target-only &  \\
$p$  & $n_2$ & SPS & JS & PCP & JS & Max SE\\ \hline
100  & 1 &  1.64 & 1.99 & 1.08 & 0.92 &  0.14 \\
500  & 1 & 1.71 & 2.01 & 1.16 & 0.88 & 0.13 \\
$10^3$ & 1 & 1.86 & 2.13 & 1.24 &  0.89 & 0.15\\
$5\cdot10^3$ & 1 & 1.81 & 1.98 & 1.12 & 0.91 & 0.11\\
$10^4$ & 1 & 1.79 & 2.02 & 1.09 & 0.84 &  0.16\\
100 & 5 &  1.09 & 1.21 & 0.99 & 0.19 &  0.09 \\
500 & 5 & 1.15 & 1.14 & 1.01 & 0.21 & 0.09\\
$10^3$ & 5 & 1.13 & 1.18 & 0.98 &  0.20 & 0.07\\
$5\cdot10^3$ & 5 & 1.19 & 1.27 & 0.94 & 0.21 & 0.09\\
$10^4$ & 5 & 1.25 & 1.16 & 0.97 & 0.19 & 0.10\\
100 & 30 &  0.88 & 0.97 & 0.66 & 0.18 &   0.07\\
500 & 30 & 0.79 & 0.92 & 0.48 & 0.03 & 0.07\\
$10^3$ & 30 & 0.82 & 0.84 & 0.52 &  0.03 & 0.07 \\
$5\cdot10^3$ & 30 & 0.93 & 0.72 & 0.51 & 0.03 & 0.07 \\
$10^4$ & 30 & 0.74 & 0.76 & 0.38 & 0.03 & 0.06 \\
\end{tabular}
}
\caption{\footnotesize MSE for the means $\bbeta_2$ in the bounded case with $p=n_1$ and $\gamma = 0$. The scenarios vary by the size of the vector of means ($p$), the number of nonzero means ($q$), and the number of target data samples ($n_2$). The final column gives the largest Monte Carlo standard error in each row to gauge statistical significance between methods.\label{t:sim_b2}}
\end{table}

\section{Material informatics example}\label{s:app}

We apply our proposed TL methods for fine-tuning the last layer of a neural network used for band gap prediction of a molecular crystal target data set. The data are described in \cite{abba2023}.  Band gap, defined as the energy differential between the lowest unoccupied and highest occupied electronic states \citep{kittel2019sons}, determines many aspects of interest in industrial sectors such as electric and photovoltaic conductivity. We use TL methods to borrow information from molecules (source) to molecular crystals (target) to improve prediction. The full target data consists of 50,000 molecular crystals and their band gap values computed using density functional theory \citep[DFT;][]{DFT_citation}. This data set was provided by the Material Science Research Group at Carnegie Mellon University, and required several years to collect. The source data will be the OE62 data set \citep{stuke2020atomic} which consists of $n_1= 62,000$ molecules and their band gap values.  The molecules in the source data were all extracted from organic crystals, and the band gap values were also computed using DFT.  We consider training data consisting of subsets of the target data with $n_2\in\{500, 1{,}000, 2{,}000\}$ to represent a typical materials data set size; we consider a validation set of 10,000 samples, and a testing set of 20,000 samples.

Since we are dealing with molecules, we first need to preprocess the data and compute descriptors. We use the MBTR descriptor method \citep{huo2018unified} to compute $p=1{,}260$ descriptors for every data point in both the source and target data sets. These descriptors are functions of the structure of the material, e.g., the types and configuration of its atoms. A possible approach is to use these features as covariates in a linear regression.  The relationship between these features and band gap, however, is likely not additive or linear.  To capture these complexities, we follow \cite{abba2023} and use a neural network model.  The architecture of the neural network model has two MEGNET layers \citep{chen2019graph} followed by two hidden layers with 1,024 and 512 neurons, respectively. MEGNET is a feature-extraction layer designed specifically for materials problems that exploits physical laws to derive meaningful features from the molecule's structure.

We apply the full deep learning model to the source data, and refit only the output layer parameters using target data.  First, the model is trained on the source data with 40,000 samples for training and 10,000 for validation. The parameters of the best performing model in terms of the MSE on the validation set of the source task are saved and transferred to the target task.  The neural network up to the last hidden layer is used to extract a feature set of size $p=512$ for the target task. 

Let $\hat{\bbeta}_1$ and $\sigma^2\hat{\bSigma}$ be the estimated output layer weights and their estimated variance from the source task, and let $\bbeta_2$ be the weights of the output layer in the target task. We test both the HS and the PCP prior on $\bdelta = \bbeta_2 - \bbeta_1$, and compare their performance to classical ordinary least squares (OLS) estimation using prediction MSE and coverage on the test set. Let $\bY$ denote the true band gap values for the target data, and denote by $\bX$ the extracted features from the last hidden layer of the neural net. For the target data, we fit the following model
\begin{equation}\label{eq: reg_target_model}
\bY \sim \mbox{Normal}(\bX\bbeta_2, \sigma^2\bI_{n_2}).
\end{equation} 
For the PCP and HS methods, we first compute the predicted band gap on the target data as $\tilde{\bY} := \bX\hat{\bbeta}_1 \sim \mbox{Normal}(\bX\bbeta_1, \sigma^2\bX\hat{\bSigma}\bX\Tr)$.  Next, define $\bZ := \bY - \tilde{\bY}$, the difference between the true target data values and the predicted values given the first-stage estimates. We reparameterize \eqref{eq: reg_target_model} in terms of $\bdelta$, and fit the following second-stage regression model
\[
\bZ \sim \mbox{Normal}(\bX\bdelta, \sigma^2(\bI_{n_2} + \bX\hat{\bSigma}\bX\Tr)).
\]
We compare the Bayesian TL methods with simply fitting $\bbeta_2$ on the training data using OLS. Furthermore, a baseline model for comparison that fits the full neural network model using only the target data is also included. 
Due to the difficulty of training an end-to-end neural network model on small amounts of data, when $n_2 = 500$ we first use 5,000 samples for $5$ epochs to get reasonable starting values, then continue training the baseline model on the $500$ target data points. 

For the HS and PCP methods, we run $5\cdot10^5$ iterations of the MCMC sampling algorithm described in Appendix A.2. For the OLS method, we use the usual prediction intervals. For the baseline model, the prediction intervals are obtained using a drop-out layer \citep{srivastava2014dropout} before the last hidden layer of the model.

\begin{table}[H]
\centering
{\footnotesize
\begin{tabular}{c|cccc}
\multicolumn{1}{c}{}& \multicolumn{4}{c}{Method} \\
$n_2$ & PCP & HS & OLS & Baseline \\ \hline
$500$ & $0.91 (83\%)$ & $0.94 (87\%) $ &  $1.08 (89\%)$ & $1.02 (71\%)$ \\
$1000$& $0.83 (92\%)$ & $0.90 (88\%)$  & $0.94 (88\%)$ & $ 0.94 (61\%)$\\
$2000$& $0.81 (91\%) $ &  $0.88 (90\%)$ & $ 0.87 (91\%) $ & $0.82 (63\%)$ 
\end{tabular}
}
\caption{\footnotesize Test set prediction MSE and coverage of $95\%$ intervals for the band gap prediction application with different target data sizes, $n_2$.  The methods are Bayesian TL using PCP and HS priors, TL using OLS, and a baseline deep learning model that uses only target data.}
\label{tab:my_label}
\end{table}

For small target data sets with $n_2=500$, the Bayesian TL methods provide substantial improvements over OLS and the baseline method. As expected, as $n_2$ increases, the advantage of TL dissipates and the prediction accuracy improves for all methods.  For this application, the PCP method generally outperforms the HS method. This could be because all the covariates used in the second stage are actually functions of the same original molecular features, and so all of their effects vary with similar magnitudes between the target and source data sets.

\section{Discussion}\label{s:discussion}

In this paper, we developed a Bayesian estimator for TL in high-dimensional settings. We propose a two-stage procedure based on estimates from the source task, our proposed methodology obviates the need for source task data and only requires the estimates and their uncertainty. This makes our method attractive for cases where the source data is abundant but not available due to privacy rules. In this case, our proposed methodology can leverage prior beliefs on the relationships between similar tasks.

In the Bayesian framework, assumptions on model parameters can naturally be translated into a prior distribution, and so depending on the assumed similarity between source and target tasks we proposed two different estimators, each based on a different prior distribution.  Although we focused on the normal-means model, this work is a step in the direction of further theoretical examination of Bayesian TL methods in linear regression as well as in more complex models such as non-Gaussian and non-continuous response cases. 

When the difference of the means vectors between the two tasks is sparse, a global-local type prior is proposed and the theoretical guarantees for the normal-means model are provided. We observe through theoretical analysis of the risk that the first and second stage of the estimation procedure enjoy a synergy property. Meaning that the total risk of the estimator is strictly less than the sum of the risks from the first- and second-stage estimators. Further, using minimax-type results, we showed that our estimator results in negligible risk, compared to the best estimator based on the target data only. From a TL point of view, this not only means that we run no risk of negative transfer, but also, the proposed estimator vastly outperforms any estimator based solely on the target data. We choose the HS prior for the sparse regression case because it is computationally convenient and has theoretical support.  However, other choices of global-local shrinkage priors \citep[e.g.,][]{bhattacharya2015, bhadra2017, zhang2022bayesian} could result in comparable performance.

\bibliographystyle{agsm}
\bibliography{refs}

\section*{Appendix A.1: Proofs}\label{s:A1}

\begin{proof}[Proof of Lemma \ref{cl: cross_neg}]

\begin{align}\label{eq: cross_term}
\EX_{\bbeta_1^o, \bdelta^o}[ (\overline{\Y}_1  - \bbeta_1^o)^{T} (\hat{\bdelta}  - \bdelta^o)] & = \sum_{j=1}^p \EX_{\beta_{1j}^o, \delta_j^o} [ (\overline{Y}_{1j} - \beta_{1j}^o)(\hat{\delta}_j - \delta_j^o) ] \nonumber \\
& = \sum_{j=1}^p \EX_{\beta_{1j}^o, \delta_j^o} [ \EX_{\beta_{1j}^o, \delta_j^o}[\overline{Y}_{1j} - \beta_{1j}^o \mid  Z_j ](\hat{\delta}_j - \delta_j^o) ] \nonumber \\
& = \sum_{j=1}^p \EX_{\delta_j^o} [ -\frac{1}{1+n_1}(Z_j - \delta_j^o)(\hat{\delta}_j - \delta_j^o) ] \nonumber \\
%& = -\frac{1}{1+n_1}\sum_{j=1}^p \EX_{\delta_j^o} [ (Z_j - \delta_j^o)(\hat{\delta}_j - \delta_j^o) ] \nonumber \\
& = -\frac{1}{1+n_1}\sum_{j=1}^p \EX_{\delta_j^o} [ (Z_j - \delta_j^o)\{w_{\tau, \sigma_n^2}(Z_j)Z_j - \delta_j^o\} ].
\end{align}
If $\delta_j^o = 0$:
\[
\EX_{\delta_j^o}[(Z_j - \delta_j^o)\{w_{\tau, \sigma_n^2}(Z_j)Z_j - \delta_j^o\}] =  \EX_{\delta_j^o} [ Z_j^2w_{\tau, \sigma_n^2}(Z_j)] > 0.
\]
If $\delta_j^o \neq 0$:
\[
\EX_{\delta_j^o}[(Z_j - \delta_j^o)\{w_{\tau, \sigma_n^2}(Z_j)Z_j - \delta_j^o\}] =  \EX_{\delta_j^o}[Z_jw_{\tau, \sigma_n^2}(Z_j)(Z_j - \delta_j^o)].
\]

Recall that by Stein's Lemma \citep{stein_1981} if $X$ is a normally distributed random variable with mean $\mu$ and variance $\sigma^2$, and if $h$ is a function such that $\EX[h(X)(X-\mu)]$ and $\EX[h'(X)]$  both exist then 
\begin{equation}\label{eq: stein_lemma}
\EX[h(X)(X-\mu)] = \sigma^2\EX[h'(X)]. 
\end{equation}
Let $h(z) = zw_{\tau, \sigma_n^2}(z)$. Clearly, $\EX_{\delta_j^o}[h(Z_j)(Z_j-\delta_j^o)]$ exists since $0<w_{\tau, \sigma_n^2}(z)<1$. In the notation of \eqref{eq: shrinkage_weight},
\begin{equation}\label{eq: h_derivative}
h'(z) = w_{\tau, \sigma_n^2}(z) + \frac{z^2}{\sigma_n^2}\frac{ \int_0^1 u^{\frac{3}{2}}g(u) e^{\frac{z^2}{2\sigma_n^2}u}du \int_0^1 u^{\frac{-1}{2}}g(u) e^{\frac{z^2}{2\sigma_n^2}u}du - 
\left(\int_0^1 u^{\frac{1}{2}}g(u) e^{\frac{z^2}{2\sigma_n^2}u}du\right)^2} {\left(\int_0^1 u^{-\frac{1}{2}}g(u) e^{\frac{z^2}{2\sigma_n^2}u} du\right)^2}.
\end{equation}

It can be verified that $\EX_{\delta_j^o}[h'(Z)]$ is bounded from above from the fact that $u \le 1/u$ for $u \in (0,1)$, and in fact, $h'(z) \leq 1 + z^2/\sigma^2_n$.  Next, the following argument shows that $\EX_{\delta_j^o}[h'(Z)] > 0$.  Hence, $\EX_{\delta_j^o}[h'(Z)]$ exists, and the result of the lemma is established.

By the Cauchy-Schwartz inequality,
\begin{align}
\int_0^1 u^{\frac{1}{2}}g(u) e^{\frac{z^2}{2\sigma_n^2}u}du  = & \int_0^1 u^{\frac{3}{4}}g(u)
^{1/2}e^{\frac{z^2}{4\sigma_n^2}u}u^{\frac{-1}{4}}g(u)
^{1/2}e^{\frac{z^2}{4\sigma_n^2}u}du \nonumber \qquad \qquad \\
\leq & \left(\int_0^1 u^{\frac{3}{2}}g(u)e^{\frac{z^2}{2\sigma_n^2}u}du\int_0^1 u^{\frac{-1}{2}}g(u)e^{\frac{z^2}{2\sigma_n^2}u}du\right)^{1/2}, \nonumber 
\end{align}
and so
\[
\left(\int_0^1 u^{\frac{1}{2}}g(u) e^{\frac{z^2}{2\sigma_n^2}u}du\right)^2 \leq \int_0^1 u^{\frac{3}{2}}g(u)e^{\frac{z^2}{2\sigma_n^2}u}du \int_0^1 u^{\frac{-1}{2}}g(u)e^{\frac{z^2}{2\sigma_n^2}u}du.
\]
Hence, the second term on the right of \eqref{eq: h_derivative} is nonnegative.  By Stein's identity \eqref{eq: stein_lemma},
\[
\EX_{\delta_j^o} [ (Z_j - \delta_j^o)\{w_{\tau, \sigma_n^2}(Z_j)Z_j - \delta_j^o\}] =  \sigma_n^2\EX_{\delta_j^o}[h'(Z_j)] \ge \sigma_n^2 \EX_{\delta_j^o}[w_{\tau, \sigma_n^2}(Z_j)] > 0.
\]

% \begin{align}
%     \EX_{\delta_j^o} [ (Z_j - \delta_j^o)(w_{\tau, \sigma_n^2}(Z_j)Z_j - \delta_j^o)] & =  \EX_{\delta_j^o} [ (Z_j - \delta_j^o)(w_{\tau, \sigma_n^2}(Z_j)Z_j - Z_j + Z_j -  \delta_j^o)] \nonumber \\
%     & = \EX_{\delta_j^o} [ (Z_j - \delta_j^o)(w_{\tau, \sigma_n^2}(Z_j)Z_j - Z_j)] + \Var_{\delta_j^o}[Z_j] \nonumber \\
%     & = \EX_{\delta_j^o} [ (Z_j - \delta_j^o)(w_{\tau, \sigma_n^2}(Z_j)Z_j - Z_j)] + \sigma_n^2
% \end{align}

% \begin{align}
%     \lvert \EX_{\delta_j^o} [ (Z_j - \delta_j^o)(w_{\tau, \sigma_n^2}(Z_j)Z_j - Z_j)] \rvert & \leq \EX_{\delta_j^o} [ \lvert Z_j - \delta_j^o\rvert \lvert w_{\tau, \sigma_n^2}(Z_j)Z_j - Z_j \rvert] \nonumber \\
%     & \leq \left( \EX_{\delta_j^o} [ (Z_j - \delta_j^o)^2] \right)^{1/2} \left( \EX_{\delta^o_j}[(1-w_{\tau, \sigma_n^2}(Z_j)) Z_j)^2] \right)^{1/2}
% \end{align}
\end{proof}

\begin{lemma}\label{cl: cross_neg_cases} 
Under assumptions (A2), (A3), (A4), and as $p\rightarrow \infty$.
If $\delta_j^o = 0$:
\[
\EX_{\delta_j^o} [ (Z_j - \delta_j^o)\{w_{\tau, \sigma_n^2}(Z_j)Z_j - \delta_j^o\}] \ge \sigma_n^2 \tau \sqrt{\log(1/\tau)}\{1 + o(1)\}.
\]
If $\delta_j^o \neq 0$:
\[
\EX_{\delta_j^o} [ (Z_j - \delta_j^o)\{w_{\tau, \sigma_n^2}(Z_j)Z_j - \delta_j^o\}]  \ge \sigma_n^2\{1 + o(1)\}.
\]
\end{lemma}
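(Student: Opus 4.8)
The plan is to prove the two displayed bounds coordinatewise, working with $Z_j\sim\mathrm{Normal}(\delta_j^o,\sigma_n^2)$ and the weight $w_{\tau,\sigma_n^2}$ of \eqref{eq: shrinkage_weight}, which the discussion preceding the statement already established to be symmetric in $Z_j$, strictly increasing in $|Z_j|$, valued in $(0,1)$, and tending to $1$ as $|Z_j|\to\infty$. Since these are exactly lower bounds on the per-coordinate cross term $\EX_{\delta_j^o}[(Z_j-\delta_j^o)\{w_{\tau,\sigma_n^2}(Z_j)Z_j-\delta_j^o\}]$ isolated in the proof of Lemma \ref{cl: cross_neg}, I would treat the signal and noise coordinates separately and reuse the Stein computation there. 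The single analytic ingredient I will need is a quantitative form of $w_{\tau,\sigma_n^2}(z)\to1$: a Watson/Laplace estimate of the ratio in \eqref{eq: shrinkage_weight} (the extra factor $u$ in the numerator integrand forces mass toward $u=1$ as $|z|$ grows) should give $1-w_{\tau,\sigma_n^2}(z)\le C\sigma_n^2/z^2$ for $|z|$ past a threshold, which I will use as the workhorse bound.

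\emph{Signal coordinates ($\delta_j^o\ne0$).} Here Lemma \ref{cl: cross_neg} already shows, via Stein's identity \eqref{eq: stein_lemma} applied to $h(z)=zw_{\tau,\sigma_n^2}(z)$ and the nonnegativity of the second term of \eqref{eq: h_derivative}, that the cross term equals $\sigma_n^2\EX_{\delta_j^o}[h'(Z_j)]\ge\sigma_n^2\EX_{\delta_j^o}[w_{\tau,\sigma_n^2}(Z_j)]$, so it suffices to prove $\EX_{\delta_j^o}[w_{\tau,\sigma_n^2}(Z_j)]\ge1-o(1)$. I would split $\EX_{\delta_j^o}[1-w_{\tau,\sigma_n^2}(Z_j)]$ over the event $\{|Z_j|\le|\delta_j^o|/2\}$ and its complement. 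On the first event I bound $1-w_{\tau,\sigma_n^2}\le1$ and note that, by (A4), $|\delta_j^o|>\sqrt{2\sigma_n^2\log(p/q)}\to\infty$, so this is a Gaussian deviation of order $|\delta_j^o|/2$ and is exponentially small. On the complement I apply the tail estimate to get $1-w_{\tau,\sigma_n^2}(Z_j)\le 4C\sigma_n^2/(\delta_j^o)^2\le 2C/\log(p/q)\to0$, again by (A4). Summing the two pieces yields the claimed $\{1+o(1)\}$ bound.

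\emph{Noise coordinates ($\delta_j^o=0$).} Now $Z_j\sim\mathrm{Normal}(0,\sigma_n^2)$ and, as in Lemma \ref{cl: cross_neg}, the cross term reduces to $\EX_0[Z_j^2 w_{\tau,\sigma_n^2}(Z_j)]$. I would exploit monotonicity of $w_{\tau,\sigma_n^2}$ in $|Z_j|$ to lower bound, for any threshold $t>0$,
\[
\EX_0[Z_j^2 w_{\tau,\sigma_n^2}(Z_j)]\ \ge\ w_{\tau,\sigma_n^2}(t)\,\EX_0\!\big[Z_j^2\,\mathbf{1}\{|Z_j|>t\}\big].
\]
Taking $t=t_\tau:=\sigma_n\sqrt{2\log(1/\tau)}$ and using the standard identity $\EX_0[Z_j^2\,\mathbf{1}\{|Z_j|>t\}]=2\sigma_n^2\{(t/\sigma_n)\phi(t/\sigma_n)+\bar\Phi(t/\sigma_n)\}$ (with $\phi,\Phi$ the standard normal density and cdf) gives $\EX_0[Z_j^2\,\mathbf{1}\{|Z_j|>t_\tau\}]=\tfrac{2}{\sqrt{\pi}}\,\sigma_n^2\,\tau\sqrt{\log(1/\tau)}\{1+o(1)\}$, where $e^{-t_\tau^2/2\sigma_n^2}=\tau$ produces the $\tau$ factor and $t_\tau$ produces $\sqrt{\log(1/\tau)}$. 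Since $\tau=(q/p)^\alpha\to0$ by (A2)--(A3), the tail estimate also gives $1-w_{\tau,\sigma_n^2}(t_\tau)\le C\sigma_n^2/t_\tau^2=C/(2\log(1/\tau))\to0$, hence $w_{\tau,\sigma_n^2}(t_\tau)\to1$. Combining these, and noting $2/\sqrt{\pi}>1$, delivers $\EX_0[Z_j^2 w_{\tau,\sigma_n^2}(Z_j)]\ge\sigma_n^2\tau\sqrt{\log(1/\tau)}\{1+o(1)\}$, as required.

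The hard part will be the uniform validity of $1-w_{\tau,\sigma_n^2}(z)\lesssim\sigma_n^2/z^2$ precisely in the regime where $z$ scales like $\sqrt{\log(1/\tau)}$ while $\tau\to0$. In this double limit the weight $g(u)=\{\tau^2+(1-\tau^2)u\}^{-1}$ piles mass of order $\tau^{-2}$ near $u=0$, directly opposing the likelihood factor $e^{z^2u/2\sigma_n^2}$ that concentrates near $u=1$; controlling this competition is the technical heart of the horseshoe analysis, and I would either carry out the Laplace-type estimate directly or import the corresponding bounds on $\EX[\kappa_j\mid Z_j]$ from \citep{vanderpas_horsehoe}. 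A secondary and easier check is that the leading constant $2/\sqrt{\pi}$ obtained at $t_\tau=\sigma_n\sqrt{2\log(1/\tau)}$ exceeds $1$, so that the stated bound holds without any further optimization of the threshold.
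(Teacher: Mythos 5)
Your treatment of the signal coordinates ($\delta_j^o\neq 0$) is essentially the paper's: the paper likewise reduces, via Stein's identity and the Cauchy--Schwarz step of Lemma \ref{cl: cross_neg}, to the bound $\sigma_n^2\EX_{\delta_j^o}[w_{\tau,\sigma_n^2}(Z_j)]$ and then asserts $\EX_{\delta_j^o}[w_{\tau,\sigma_n^2}(Z_j)]=1+o(1)$ for signals above the recovery threshold, citing \citet{vanderpas_horsehoe}; your tail-split rederivation is the same idea in self-contained form (though note your split at $|Z_j|\le|\delta_j^o|/2$ is too generous, since $|\delta_j^o|/2$ can lie \emph{below} the horseshoe threshold $\sigma_n\sqrt{2\log(1/\tau)}$ where $w_{\tau,\sigma_n^2}$ is small, so the event boundary must be placed just above the threshold rather than at half the signal).

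The noise coordinates ($\delta_j^o=0$) are where the proposal genuinely fails, and the failure is exactly at the point you flagged as ``the hard part'': the workhorse bound $1-w_{\tau,\sigma_n^2}(z)\le C\sigma_n^2/z^2$ is \emph{false} in the regime $z\asymp t_\tau:=\sigma_n\sqrt{2\log(1/\tau)}$. Splitting $I_{-1/2}$ from \eqref{eq: I_k} into the spike of $g$ near $u=0$ and the Laplace contribution near $u=1$ gives $I_{-1/2}(z)\asymp \pi/\tau + e^{z^2/(2\sigma_n^2)}\,2\sigma_n^2/z^2$, while $I_{1/2}$ has no $1/\tau$ spike (the integrand $u^{1/2}g(u)$ is integrable near $0$ uniformly in $\tau$). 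At $z=t_\tau$ one has $e^{z^2/(2\sigma_n^2)}=1/\tau$, so the Laplace term is $(1/\tau)/\log(1/\tau)\ll \pi/\tau$ and hence $w_{\tau,\sigma_n^2}(t_\tau)\asymp \{\pi\log(1/\tau)\}^{-1}\to 0$, not $1$; your claim $1-w_{\tau,\sigma_n^2}(t_\tau)\le C/(2\log(1/\tau))$ cannot be salvaged. Worse, no threshold choice rescues the monotonicity strategy: writing $e^{-t^2/(2\sigma_n^2)}=\tau\,\ell^{-a}$ with $\ell=\log(1/\tau)$, one finds $w_{\tau,\sigma_n^2}(t)\,\EX_0[Z_j^2\,\mathbf{1}\{|Z_j|>t\}]\asymp \sigma_n^2\tau\,\ell^{-1/2}$ for $0\le a\le 1$ and smaller still for $a>1$, so your lower bound is capped a full factor $\log(1/\tau)$ below the target $\sigma_n^2\tau\sqrt{\log(1/\tau)}$. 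The reason is structural: the dominant contribution to $\EX_0[Z_j^2 w_{\tau,\sigma_n^2}(Z_j)]$ comes not from the tail $|Z_j|>t_\tau$ but from the entire sub-threshold range $\sigma_n\lesssim|z|\lesssim t_\tau$, where $w_{\tau,\sigma_n^2}(z)\asymp \tau\, e^{z^2/(2\sigma_n^2)}\sigma_n^2/z^2$; there the exponential cancels the Gaussian density exactly, and integrating the resulting constant-order integrand over an interval of length $\asymp \sigma_n\sqrt{\log(1/\tau)}$ produces $\sigma_n^2\tau\sqrt{\log(1/\tau)}$. This is precisely what the paper's proof captures by the chain $w_{\tau,\sigma_n^2}\ge I_{3/2}/I_{-1/2}-w_{\tau,\sigma_n^2}^2\ge 0$ (your own Cauchy--Schwarz step) followed by the posterior-variance lower bound of Theorem 3.4 in \citet{vanderpas_horsehoe}; to repair your argument, replace the threshold/monotonicity step with that citation or with the direct two-regime evaluation of $I_{\pm 1/2}$ sketched above.
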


\begin{proof}[Proof of Lemma \ref{cl: cross_neg_cases}]
If $\delta_j^o \neq 0$, then as shown at the end of the proof of Lemma \ref{cl: cross_neg},
\[
\EX_{\delta_j^o} [ (Z_j - \delta_j^o)\{w_{\tau, \sigma_n^2}(Z_j)Z_j - \delta_j^o\}] \ge \sigma_n^2 \EX_{\delta_j^o}[w_{\tau, \sigma_n^2}(Z_j)].
\]
By assumption (A4) it must be the case that $\abs{\delta_j^o} \geq \sqrt{2\sigma_n^2\log(p/q)}$, i.e., greater than the minimum threshold necessary for signal recovery \citep{vanderpas_horsehoe}. In this case, we have $\EX_{\delta_j^o}[w_{\tau, \sigma_n^2}(Z_j)] = 1+o(1)$, hence the result.

Next, consider $\delta_j^o = 0$.  As in \cite{vanderpas_horsehoe}, define the integral:
\begin{equation}\label{eq: I_k}
I_k(Z) := \int_0^1 u^k g(u) e^{\frac{Z^2}{2\sigma_n^2}u}du.
\end{equation}
\[
w_{\tau, \sigma_n^2}(Z_j) = \frac{I_{\frac{1}{2}}(Z)}{I_{-\frac{1}{2}}(Z)} \ge \frac{I_{\frac{3}{2}}(Z)}{I_{-\frac{1}{2}}(Z)}  \ge \frac{I_{\frac{3}{2}}(Z)}{I_{-\frac{1}{2}}(Z)} - w^2_{\tau, \sigma_n^2}(Z_j) \ge 0,
\]
where the first inequality holds because $u^{\frac{1}{2}} \ge u^{\frac{3}{2}}$ for $u \in (0,1)$ and the last inequality holds by the application of the Cauchy-Schwarz inequality in the proof of Lemma \ref{cl: cross_neg}.  Then
\[
\EX_{\delta_j^o} [ (Z_j - \delta_j^o)\{w_{\tau, \sigma_n^2}(Z_j)Z_j - \delta_j^o\}] = 
\EX_{\delta_j^o} [ Z_j^2 w_{\tau, \sigma_n^2}(Z_j)] \geq \EX_{\delta_j^o} \left[ Z_j^2 \left( \frac{I_{\frac{3}{2}}(Z_j)}{I_{-\frac{1}{2}}(Z_j)} - w_{\tau, \sigma_n^2}^2(Z_j) \right) \right].
\]
Applying Theorem 3.4 in \cite{vanderpas_horsehoe} and noticing the same lower bound used for the posterior variance gives
\[
\EX_{\delta_j^o} [ (Z_j - \delta_j^o)\{w_{\tau, \sigma_n^2}(Z_j)Z_j - \delta_j^o\}] \ge \sigma_n^2\tau\sqrt{\log(1/\tau)}\{1 + o(1)\}.
\]

\end{proof}

\begin{proof}[Proof of Theorem \ref{thm:main_result}]
Under assumptions (A2), (A3), (A4), equations \eqref{eq: hs_minimax} and \eqref{eq: target_risk} give
\begin{align*}
\Riskrt{\bbeta_2^o}{\hat{\bbeta}_2}{\bbeta_2^o} & \le \frac{p}{n_1}\sigma^2 + \sigma_n^2 \cdot q \log(p/q)\{1 + o(1)\} + 2\EX_{\bbeta_1^o, \bdelta^o}[ (\overline{\Y}_1  - \bbeta_1^o)^{T} (\hat{\bdelta}  - \bdelta^o)] \\
& \le \frac{p}{n_1}\sigma^2 + \Big[\sigma_n^2 \cdot q \log(p/q) - \frac{\sigma^2}{n_1}\big\{q + (p-q)\tau\sqrt{\log(1/\tau)}\big\}\Big]\{1 + o(1)\},
\end{align*}
where the last inequality holds by \eqref{eq: cross_term} and Lemma \ref{cl: cross_neg_cases}.
\end{proof}

\section*{Appendix A.2: Computational details}\label{s:A2}
\subsection*{Gibbs sampling algorithm for the HS shrinkage model}\label{s:s:A2.2}
Sampling from the HS posterior based from the hierarchical model in \eqref{eq: full_hrc_horseshoe} can be complex, since the Half-Cauchy prior on the local and global scales is not conjugate to the normal model. However we can get a conditionally conjugate sampling scheme by introducing data augmentation variables from an inverse gamma distribution as described in \cite{makalic2015simple}. Recall that if $U^2 \mid W \sim \mbox{Inverse-Gamma}(1, 1/W) \mbox{ and } W \sim \mbox{Inverse-Gamma}(1/2, 1) \mbox{ then } U \sim \mbox{Cauchy}_{+}(0,1)$.
For the sampling of the posterior distribution of $(\bdelta, \blambda, \tau^2, \sigma^2 \mid \bZ)$ in the second stage, we augment the model in \eqref{eq: full_hrc_horseshoe}
{\small
\begin{align}
    Z_j\mid \delta_j, \sigma^2_n & \sim \mbox{Normal}(\delta_j, \sigma^2_n) \nonumber 1\leq j\leq p \\
    \delta_j \mid  \lambda_{j}, \tau, \sigma^2 & \sim \mbox{Normal}(0, \lambda_j^2\tau^2\sigma^2_n) \\
    \sigma^2 & \sim \mbox{Inverse-Gamma}(a,b) \nonumber \\
    \lambda_j \mid \zeta_j & \sim \mbox{Inverse-Gamma}(1,1/\zeta_j) \nonumber \\
    \tau \mid \nu & \sim \mbox{Inverse-Gamma}(1, 1/\nu) \nonumber \\
    \zeta_j &\sim \mbox{Inverse-Gamma}(1/2, 1) \nonumber \\
    \nu &\sim \mbox{Inverse-Gamma}(1/2, 1). \nonumber
\end{align}
}
The above hierarchy allows direct sampling from the full conditional using Gibbs sampling by sampling iteratively the following full conditionals.
{\small
\begin{align*}
    \delta_j \mid  \lambda_j, \tau, \sigma_n^2, Z_j & \sim \mbox{Normal}\left(\frac{\tau^2\lambda_j^2}{1+\tau^2\lambda_j^2}Z_j, \sigma^2_n\frac{\tau^2\lambda_j^2}{1+\tau^2\lambda_j^2} \right) \\
    \sigma^2 \mid  \bZ, \bdelta, \blambda, \tau & \sim \mbox{Inverse-Gamma}\left(p+a, \frac{\norm{\bZ - \bdelta}^2  + \sum_{i=1}^p \delta_j^2/(\lambda_j^2\tau^2)}{2} + b\right) \\
    \lambda_j \mid  Z_j, \delta_j, \sigma_n^2, \tau & \sim \mbox{Inverse-Gamma}\left(1, \frac{1}{\zeta_j} +  \frac{\delta_j^2}{2\sigma_n^2\tau^2}\right) \\
    \tau \mid  \bdelta, \blambda, \sigma_n^2 \nu & \sim \mbox{Inverse-Gamma}\left(\frac{p+1}{2}, \frac{1}{\nu} + \frac{1}{2\sigma_n^2}\sum_{i=1}^p \frac{\delta_j^2}{\lambda_j^2}\right) \\
    \zeta_j \mid  \lambda_j & \sim \mbox{Inverse-Gamma}\left(1, 1+\frac{1}{\lambda_j^2}\right) \\
    \nu \mid  \tau & \sim \mbox{Inverse-Gamma}\left(1, 1+\frac{1}{\tau^2}\right).
\end{align*}
}
\subsection*{PCP MCMC sampling algorithm}
For the PCP model, the prior on $\tildetau^2$ is not a conjugate prior, hence we use a Metropolis step to update this parameter. The full MCMC sampling cycles through the following consitional distributions
\begin{align*}
    \bbeta_2 \mid \hat{\bbeta}_1, \overline{\bY}_2, \sigma^2, \tildetau^2 &
    \sim \mbox{Normal}\left( \overline{\bY}_T + \frac{n_2\tildetau^2}{n_T + n_2\tildetau^2}(\overline{\bY}_2 - \overline{\bY}_T), \sigma^2\frac{n_2\tildetau^2}{n_T + n_2\tildetau^2}  \right) \\
    \sigma^2 \mid \hat{\bbeta}_1, \overline{\bY}_2, \tildetau^2 & 
    \sim \mbox{Inverse-Gamma}\left( 2p + a, \frac{n_2\norm{\overline{\bY}_2 - \bbeta_2}^2}{2} + 
    \frac{n_1\norm{\overline{\bY}_1 - \bbeta_2}^2}{2(1+ \tildetau^2)} + b\right) \\
    \pi(\tildetau^2 \mid \bbeta_2, \overline{\bY}_1, \sigma^2) & 
    \propto \frac{\lambda\tildetau^2}{1+\tildetau^2} \frac{e^{-\lambda  \sqrt{\tildetau^2 - \log(1+\tildetau^2)}}}{2\sqrt{\tildetau^2 - \log(1+\tildetau^2)}} \cdot (1+\tildetau^2)^{-p/2} \exp\bigg\{-\frac{n_1\norm{\overline{\bY}_1 - \bbeta_2}^2}{2\sigma^2(1+\tildetau^2)}\bigg\}.
\end{align*}
For $\tildetau^2$, we use an adaptive Metropolis-Hastings step based on the Hessian of the log-posterior.

\end{document}